\newtheorem{thm}{Theorem}[section]
\newtheorem{proposition}[thm]{Proposition}
\newtheorem{corollary}[thm]{Corollary}
\newtheorem{lemma}[thm]{Lemma}
\title[Jack--Laurent symmetric functions for special values of parameters]{Jack--Laurent symmetric functions for special  values of parameters}
\author{ A.N. Sergeev}\address{Department of Mathematics, Saratov State University, Astrakhanskaya 83, Saratov 410012, Russia and National Research University Higher School of Economics, Laboratory of Mathematical Physics,
20 Myasnitskaya Ulitsa, Moscow 101000, Russia}
\email{SergeevAN@info.sgu.ru}
\author{A.P. Veselov}
\address{Department of Mathematical Sciences,
Loughborough University, Loughborough LE11 3TU, UK  and Moscow State University, Moscow 119899, Russia}
\email{A.P.Veselov@lboro.ac.uk}
\begin{document}

\maketitle

\begin{abstract}  We consider the Jack--Laurent symmetric functions for special values of parameters $p_0=n+k^{-1}m,$ where $k$ is not rational and $m$ and $n$ are natural numbers.  In general, the coefficients of such functions may have poles at these values of $p_0.$ The action of the corresponding algebra of quantum Calogero-Moser integrals $\mathcal D(k, p_0)$ on the space of Laurent symmetric functions defines the decomposition into generalised eigenspaces. We construct a basis in each generalised eigenspace as certain linear combinations of the Jack--Laurent symmetric functions, which are regular at $p_0=n+k^{-1}m,$ and describe the action of $\mathcal D(k, p_0)$ in these eigenspaces. 
\end{abstract}

\section{Introduction}

The Jack symmetric functions $P_{\lambda}^{(k)}$ can be considered as one-parameter generalisation of Schur symmetric functions \cite{KS,Ma} and play an important role in many areas of mathematics and theoretical physics. They can be also defined as the eigenfunctions of an infinite-dimensional version of the Calogero-Moser-Sutherland (CMS) operators \cite{CMS}.

In paper \cite{SV6} we introduced and studied a Laurent version of Jack symmetric functions - {\it Jack--Laurent symmetric functions} $P^{(k,p_0)}_{\alpha}$ as certain elements of $\Lambda^{\pm}$  labelled by bipartitions $\alpha=(\lambda,\mu)$, which are pairs of the usual partitions $\lambda$ and $\mu.$ Here $\Lambda^{\pm}$ is freely generated by $p_a$ with $a \in \mathbb Z \setminus \{0\}$ being both positive and negative.
The variable $p_0$ plays a special role and is considered as an additional parameter.
The usual Jack symmetric functions $P_{\lambda}^{(k)}$  are particular cases of $P^{(k,p_0)}_{\alpha}$ corresponding to empty second partition $\mu.$  The simplest example of Jack--Laurent symmetric function corresponding to two one-box Young diagrams is given by
 $$P^{(k,p_0)}_{1,1}= p_1 p_{-1} - \frac{p_0}{1+k-kp_0}.$$
 We proved the existence of $P^{(k,p_0)}_{\alpha}$ for all $k \notin \mathbb Q$ and $p_0\neq n+k^{-1}m, \, m,n \in \mathbb Z_{>0}$ (see Theorem 4.1 in \cite{SV6}). The coefficients of  $P^{(k,p_0)}_{\alpha}$ as functions of $p_0$ are rational and may have poles at $p_0=n+k^{-1}m$ with natural $m,n,$
so the corresponding Jack--Laurent symmetric function may not exist (as one can see in the example above).
This is related to the fact that the spectrum of the algebra of the corresponding quantum CMS integrals $\mathcal D(k, p_0)$ is not simple, which leads to the decomposition of $\Lambda^{\pm}$ into generalised eigenspaces. 

In this paper we fix a non-rational value of $k$ and study the analytic properties of Jack--Laurent symmetric functions as functions of $p_0$ at the special values $p_0=n+k^{-1}m$. 
The main result is the construction of a basis in each generalised eigenspace of $\mathcal D(k, p_0)$ as certain linear combinations of the Jack--Laurent symmetric functions, which are regular at $p_0=n+k^{-1}m$. 

The structure of the paper is as follows. In the next section we introduce the equivalence relation on the set of bipartitions induced by the action of the algebra $\mathcal D(k, p_0)$ and study it in detail.
In particular, we show that each equivalence class $E$ consists of $2^r$ elements, which can be explicitly described in terms of geometry of the corresponding Young diagrams (see Fig. 1 below).

In the third section we construct the linear combinations of Jack--Laurent symmetric functions
$$
Q^{(k,p_0)}_{\alpha}=\sum_{\beta\in E,\,\, \beta\subset \alpha}a_{\beta\alpha}(k,p_0)P^{(k,p_0)}_{\beta},
$$
which are regular at $p_0=n+k^{-1}m$ and give a basis in the corresponding generalised eigenspace. Here $E$ is the equivalence class of bipartition $\alpha$ and $a_{\beta\alpha}(k,p_0)$ are some rational functions of $p_0$ with poles at $p_0=n+k^{-1}m$ of known order (see Theorem \ref{regul} below). As a corollary we describe the order of the pole of $P^{(k,p_0)}_{\alpha}$ at $p_0=n+k^{-1}m$ in terms of the geometry of the corresponding bipartition $\alpha.$ We are using the technique similar to the translation functors in the representation theory \cite{BG,Z} and based on the Pieri formula for Jack--Laurent symmetric functions derived in \cite{SV6}.

In the last section we describe the action of the algebra $\mathcal D(k, p_0)$ with $p_0=n+k^{-1}m$ in each generalised eigenspace $V_E$. More precisely, we show that provided $k$ is non-algebraic the image of $\mathcal D(k, p_0)$ in $End \, V_E$ is isomorphic to the tensor product of $r$ copies of dual numbers $\mathfrak A_r=\mathbb C[\varepsilon]^{\otimes r}$, $\varepsilon^2=0$ and the corresponding action of $\mathfrak A_r$ in $V_E$ is the regular representation of $\mathfrak A_r$.

\section{Equivalence  relation}

We start with the following result from our paper \cite{SV6} about the quantum CMS integrals at infinity.

Let us assume at the beginning that $k$ is not rational and $p_0\neq n+k^{-1}m, \, m,n \in \mathbb Z_{>0}$  and consider the corresponding Jack-Laurent symmetric function $P^{(k,p_0)}_{\alpha}$ indexed by bipartition $\alpha =(\lambda,\mu)$ (see \cite{SV6} for the precise definition).
We will use the standard representation of the partitions as Young diagrams \cite{Ma}.

\begin{thm}\cite{SV6}\label{Bern} There exist quantum CMS integrals $\mathcal B^{(r)}: \Lambda^{\pm} \rightarrow \Lambda^{\pm}$ polynomially depending on $p_0$ such that
\begin{equation}\label{bil}
\mathcal B^{(r)} P^{(k,p_0)}_{\alpha}=b_r(\alpha, k, p_0) P^{(k,p_0)}_{\alpha},
\end{equation}
where
\begin{equation}\label{bal}
b_r(\alpha, k, p_0)=\left(\sum_{x\in \lambda} c(x,0)^{r-1}
+(-1)^r \sum_{x\in \mu} c(x,1+k-kp_0)^{r-1}\right)
\end{equation}
and the content $c(x,a)$ of the box $x=(ij)$ is defined by $$c(x,a)=(j-1)+k(i-1)+a.$$
\end{thm}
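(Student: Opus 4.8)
The plan is to obtain the whole commuting family $\mathcal B^{(r)}$ from a single operator-valued generating series and to read off its spectrum from a "sum over boxes" regularization. The starting observation is that the proposed eigenvalues are exactly the Laurent coefficients at $u=\infty$ of a sum of simple poles sitting at the contents: expanding geometric series one checks that
\[
\sum_{r\ge 1} u^{-r}\, b_r(\alpha,k,p_0)=\sum_{x\in\lambda}\frac{1}{u-c(x,0)}-\sum_{x\in\mu}\frac{1}{u+c(x,1+k-kp_0)},
\]
the sign $(-1)^r$ coming precisely from the expansion of $-1/(u+c)$. So I would first construct an operator-valued generating series $\mathcal B(u)=\sum_{r\ge 1}u^{-r}\mathcal B^{(r)}$ on $\Lambda^{\pm}$ whose eigenvalue on $P^{(k,p_0)}_{\alpha}$ is the right-hand side, with each coefficient $\mathcal B^{(r)}$ polynomial in $p_0$, and then read off the individual integrals from the expansion in $u^{-1}$.

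To build $\mathcal B(u)$ I would start in $N$ variables. On the symmetric part of the Laurent polynomial ring the Cherednik--Dunkl operators $\hat\xi_1,\dots,\hat\xi_N$ commute, the finite Jack--Laurent polynomial attached to $\alpha=(\lambda,\mu)$ at $p_0=N$ is their joint eigenfunction, and the power sums $\sum_i \hat\xi_i^{\,r-1}$ are the CMS quantum integrals, with eigenvalue $\sum_i \bar\alpha_i^{\,r-1}$ in terms of the spectral vector $\bar\alpha$ of the composition $\alpha$. The joint-eigenfunction property itself needs no computation for generic parameters: each $\mathcal B^{(r)}$ commutes with the quadratic CMS operator and is triangular in the dominance order on bipartitions, so on the (generically simple) spectrum it is forced to be diagonal in the basis $\{P^{(k,p_0)}_{\alpha}\}$, with eigenvalue equal to the diagonal entry on the leading monomial. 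The content function $c(x,a)=(j-1)+k(i-1)+a$ then emerges from a telescoping identity that converts $\sum_i \bar\alpha_i^{\,r-1}$, summed over rows, into a sum over the boxes of the diagram.

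The genuinely new points compared with the ordinary Jack case are the second diagram $\mu$, the shift $a=1+k-kp_0$ and the sign $(-1)^r$, and I would extract all three from the inversion involution $x_i\mapsto x_i^{-1}$ of the Laurent ring: under it the negative power sums are traded for positive ones, a box of $\mu$ sees its spectral value reflected to $-c(x,\cdot)$ (which is the source of the $(-1)^{r-1}$ inside $\hat\xi_i^{\,r-1}$, hence of the overall $(-1)^r$), and the parameter $p_0$ takes the place of $N$ in the shift $1+k-kp_0=1-k(p_0-1)$. The step I expect to be the main obstacle is the passage $N\to\infty$ carried out intrinsically on $\Lambda^{\pm}$: the naive sum $\sum_i \bar\alpha_i^{\,r-1}$ diverges because of the infinitely many empty rows, so one must subtract the "vacuum" contribution and show that the regularized operators $\mathcal B^{(r)}$ stabilize to well-defined maps $\Lambda^{\pm}\to\Lambda^{\pm}$ depending polynomially on $p_0$, with the finite box-sum \eqref{bal} as eigenvalue. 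The polynomiality and the exact identification of the eigenvalue would then follow by analytic continuation in $p_0$ from the integer specializations $p_0=N$, where the finite-variable computation is valid.
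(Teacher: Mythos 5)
This theorem is not actually proved in the paper you were given: it is quoted, with the citation \cite{SV6}, from the authors' earlier work on Jack--Laurent symmetric functions, so the only meaningful comparison is with the construction there. Your proposed route --- finite-$N$ Cherednik--Dunkl operators, power sums as CMS integrals, regularization of the divergent row sums, stabilization as $N\to\infty$, and identification of the eigenvalues by continuation in $p_0$ from the integer specializations $p_0=N$ --- is essentially the strategy of \cite{SV6} and of the authors' companion work on Dunkl operators at infinity, where the intertwining homomorphisms $\varphi_N:\Lambda^{\pm}\to\Lambda^{\pm}_N$ (with $p_a\mapsto\sum_i x_i^a$, $p_0\mapsto N$) and the fact that $\bigcap_N\ker\varphi_N=0$ are the transfer mechanism. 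Your generating-function identity
\[
\sum_{r\ge 1} u^{-r}\, b_r(\alpha,k,p_0)=\sum_{x\in\lambda}\frac{1}{u-c(x,0)}-\sum_{x\in\mu}\frac{1}{u+c(x,1+k-kp_0)}
\]
is correct, and the argument ``commutes with the quadratic integral $+$ triangular $\Rightarrow$ diagonal in the $P_\alpha$ basis'' is the standard one; note it requires separation of quadratic eigenvalues only for \emph{comparable} bipartitions (incomparable ones can collide for all $k$), which is exactly how $P_\alpha$ is characterized in \cite{SV6}.

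The genuine gap is the one you flagged yourself and then deferred: the existence of the operators $\mathcal B^{(r)}:\Lambda^{\pm}\to\Lambda^{\pm}$ with coefficients polynomial in $p_0$ \emph{is} the content of the theorem, and your proposal does not construct them. Subtracting the vacuum contribution from $\sum_i\hat\xi_i^{\,r-1}$ cures the divergence row by row at each fixed $N$, but it does not by itself produce a single operator on $\Lambda^{\pm}$ that is independent of $N$, has coefficients polynomial in $p_0$, and satisfies $\varphi_N\circ\mathcal B^{(r)}=\mathcal B^{(r)}_N\circ\varphi_N$ for every large $N$; establishing that requires either explicit closed formulas at infinity (as one has for the quadratic integral) or the Dunkl-operator-at-infinity machinery, and that is where all the work of \cite{SV6} lies. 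Moreover, your analytic-continuation step is only available \emph{after} this is done: one can conclude an identity of rational functions of $p_0$ from its validity at infinitely many integers $p_0=N$ only once the two sides are already defined over $\mathbb C(p_0)$, so the continuation cannot be used to conjure the operators or their polynomiality. A smaller inaccuracy: the sign $(-1)^r$ is not produced solely by the reflection of spectral values under $x_i\mapsto x_i^{-1}$ (that accounts only for $(-1)^{r-1}$); the remaining minus sign comes from the fact that the $\mu$-rows telescope downward from $0$ to $-\mu_i$, so boxes of $\mu$ enter the regularized sum with negative multiplicity --- your generating function has this right, but the attribution in your text does not.
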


The algebra of CMS integrals $\mathcal D(k, p_0)$ is generated by these operators.

Let us introduce the following equivalence relation $\mathcal E$ on bipartitions, depending on parameters $k, p_0.$ We say that $\alpha=(\lambda,\mu)$ is {\it $\mathcal E$-equivalent} to $\tilde\alpha=(\tilde\lambda,\tilde\mu)$ if and only if for all $r\ge1$ we have
$$b_r(\alpha, k, p_0) =b_r(\tilde\alpha, k, p_0),$$ or, more explicitly, 
\begin{equation}\label{eqv}
\sum_{x\in\lambda} c(x,0)^{r-1}+(-1)^{r}\sum_{y\in\mu} c(y,1+k-kp_0)^{r-1}
\end{equation}
$$ =\sum_{x\in\tilde\lambda} c(x,0)^{r-1}+(-1)^{r}\sum_{y\in\tilde\mu} c(y,1+k-kp_0)^{r-1}.$$

If parameters $k,p_0$ are non-special, then this equivalence relation is trivial. More precisely, we have the following result \cite{SV6}.

\begin{proposition}
If $k$ is not rational and $p_0\ne n+k^{-1}m,\,m,n\in\Bbb Z_{>0},$  then  $\alpha$ is $\mathcal E$-equivalent to $\tilde\alpha$ if and only if $\alpha=\tilde\alpha.$ 
\end{proposition}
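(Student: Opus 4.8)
The plan is to prove the nontrivial implication; the converse is immediate. First I would simplify notation by setting $q_0=1+k-kp_0$ and recording the two multisets of contents attached to $\alpha=(\lambda,\mu)$: let $A=\{(j-1)+k(i-1):(i,j)\in\lambda\}$ and $B=\{(j-1)+k(i-1)+q_0:(i,j)\in\mu\}$, and similarly $\tilde A,\tilde B$ for $\tilde\alpha$. Using the identity $(-1)^r b^{r-1}=-(-b)^{r-1}$, the left-hand side of \eqref{eqv} becomes $\sum_{c\in A}c^{r-1}-\sum_{c\in -B}c^{r-1}$, where $-B=\{-b:b\in B\}$. Thus the $\mathcal E$-equivalence of $\alpha$ and $\tilde\alpha$ is exactly the statement that the signed multiset (an element of the free abelian group on $\mathbb{C}$) $[A]-[-B]$ has the same power sums $\sum_c m_c\,c^{s}$ for every order $s=r-1\ge0$ as $[\tilde A]-[-\tilde B]$.

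Next I would invoke the standard fact that power sums of all nonnegative orders determine a finite signed multiset. Encoding them in the generating function $\sum_{s\ge0}\big(\sum_c m_c c^{s}\big)z^{s}=\sum_{c}\frac{m_c}{1-cz}$ and reading off residues at the simple poles $z=1/c$ recovers each multiplicity $m_c$ with $c\neq0$, while the value at $z=0$ recovers $m_0$. Hence the $\mathcal E$-condition is equivalent to the identity of signed multisets
\begin{equation*}
[A]-[\tilde A]=[-B]-[-\tilde B].
\end{equation*}

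The crucial step, and the only place where the hypothesis on $p_0$ enters, is to show that the two sides of this identity have disjoint supports, forcing both to vanish. The elements of $A$ and $\tilde A$ lie in the set $\Lambda_+=\{p+kq:p,q\in\mathbb{Z}_{\ge0}\}$, whereas those of $-B$ and $-\tilde B$ lie in $-(\Lambda_++q_0)$. An overlap $p+kq=-(p'+kq'+q_0)$ with $p,q,p',q'\in\mathbb{Z}_{\ge0}$, after substituting $q_0=1+k-kp_0$ and solving for $p_0$ (here $k\neq0$ since $k\notin\mathbb{Q}$), forces $p_0=(q+q'+1)+k^{-1}(p+p'+1)$, that is $p_0=n+k^{-1}m$ with $n,m\in\mathbb{Z}_{>0}$ — precisely the excluded values. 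Hence under our hypothesis there is no overlap, the supports are disjoint, and both signed multisets vanish, giving $A=\tilde A$ and $B=\tilde B$. Finally, since $k\notin\mathbb{Q}$ the numbers $1$ and $k$ are linearly independent over $\mathbb{Q}$, so the content map $(i,j)\mapsto(j-1)+k(i-1)$ is injective; therefore $A=\tilde A$ recovers $\lambda=\tilde\lambda$ and, after the common shift by $q_0$, $B=\tilde B$ recovers $\mu=\tilde\mu$, so $\alpha=\tilde\alpha$.

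I expect the main obstacle to be exactly the support-disjointness computation: one must verify that the only way a content of $\lambda$ can coincide with the negative of a shifted content of $\mu$ is through the forbidden relation $p_0=n+k^{-1}m$. The signed-multiset formulation is what makes this clean, since it automatically absorbs the delicate cases of the zero content (the box $(1,1)$, and a possibly vanishing shifted content in $B$), which would otherwise demand separate bookkeeping because they contribute no pole to the generating function.
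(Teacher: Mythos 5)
Your proof is correct and takes essentially the same route as the paper's: both reduce the equal-power-sums condition to an identity of multisets of contents (the paper moves the $\mu$- and $\tilde\mu$-terms across the equation to compare two unsigned multisets up to permutation, while you phrase it via signed multisets and residues of the generating function), and both rest on the same two facts — that an overlap $c(x,0)=-c(\tilde y,1+k-kp_0)$ forces $p_0=(i+\tilde i-1)+k^{-1}(j+\tilde j-1)$, which is exactly the excluded value, and that irrationality of $k$ makes the content map $(i,j)\mapsto (j-1)+k(i-1)$ injective. The signed-multiset packaging is a cosmetic, not substantive, difference.
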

\begin{proof} 
If (\ref{eqv}) is true for all $r\ge1$ then the sequences 
$$
(c(x,0), -c(y,1+k-kp_0))_{x\in\lambda,y\in\tilde\mu},\quad (c(x,0), -c(y,1+k-kp_0))_{x\in\tilde\lambda,y\in\mu}
$$
coincide  up to a permutation. Therefore we have for every $x\in \lambda$ two possibilities: $c(x,0)=c(\tilde x,0)$ for some $\tilde x\in\tilde\lambda$, or $c(x,0)=-c(\tilde y,1+k-kp_0)$ for some $\tilde y\in\mu$. In the first case we have for
 $x=(ij),\,\tilde x=(\tilde i\tilde j)$ the relation $j-\tilde j+k(i-\tilde i)=0$,
so $j=\tilde j,\,i=\tilde i$ since $k$ is not rational. 

In the second case we have for $
 \tilde y=(\tilde i\tilde j)$ that 
 \begin{equation}\label{sec}
 kp_0=j+\tilde j-1+k(i+\tilde i-1),
 \end{equation}
which contradicts to our assumption, since both $j+\tilde j-1$ and $i+\tilde i-1$ are positive integers.
\end{proof}

Consider now the case of special values of parameters when $$p_0=n+k^{-1}m$$
for some $n,m\in\Bbb Z_{>0},$ still assuming that $k$ is not rational.
Denote by $\pi(n,m)$ the rectangular Young diagram of size $n\times m$ and the corresponding bipartition $\pi=(\pi(n,m),\pi(n,m))$. 
Define the central symmetry transformation $\theta$ acting on $(ij)\in \pi(n,m)$ by $$\theta(ij)=(n-i+1,m-j+1).$$ 
Inclusion of the Young diagrams induces the following partial order on bipartitions. We say that  $\alpha\subset \tilde\alpha$ if and only if 
$\lambda\subset\tilde\lambda$ and $\mu\subset\tilde\mu$, where the Young diagrams are understood as the subsets of the plane.
We will use the same convention for all set-theoretical operations for bipartitions.

\begin{proposition} 
 Bipartition $\alpha=(\lambda,\mu)$ is $\mathcal E$-equivalent to $\tilde \alpha=(\tilde\lambda,\tilde\mu)$ if and only if
 \begin{equation}\label{sec1}
\alpha\setminus\pi=\tilde\alpha\setminus\pi
 \end{equation}
 and
 \begin{equation}\label{sec2}
\theta(\lambda\setminus \tilde \lambda)=\mu\setminus \tilde \mu, \quad \theta(\tilde\lambda\setminus  \lambda)=\tilde\mu\setminus\mu.
 \end{equation}
 \end{proposition}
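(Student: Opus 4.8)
The plan is to reduce the $\mathcal E$-equivalence to a purely combinatorial identity between lattice-point configurations, and then read off \eqref{sec1} and \eqref{sec2} from it. By Theorem \ref{Bern} and the definition of $\mathcal E$, the bipartition $\alpha$ is equivalent to $\tilde\alpha$ precisely when $b_r(\alpha,k,p_0)=b_r(\tilde\alpha,k,p_0)$ for all $r\ge1$. Using $(-1)^r c^{r-1}=-(-c)^{r-1}$, I first rewrite $b_r$ as the $(r-1)$-st power sum of the signed collection of values $\{c(x,0):x\in\lambda\}$ (taken with sign $+$) together with $\{-c(y,1+k-kp_0):y\in\mu\}$ (taken with sign $-$). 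Exactly as in the proof of the previous proposition, equality of all these power sums is equivalent, after moving the negatively signed terms to the opposite sides, to the multiset identity
$$\{c(x,0):x\in\lambda\}\sqcup\{-c(y,1+k-kp_0):y\in\tilde\mu\}=\{c(x,0):x\in\tilde\lambda\}\sqcup\{-c(y,1+k-kp_0):y\in\mu\};$$
here I use that $k$ is irrational, so distinct boxes produce distinct real contents and a finite multiset of reals is determined by its power sums of all orders.

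The second step is to encode contents as lattice points. At $p_0=n+k^{-1}m$ one computes $c(x,0)=(j-1)+k(i-1)$ for $x=(ij)$, and $-c(y,1+k-kp_0)=(m-j)+k(n-i)$ for $y=(ij)$; the latter equals $c(\theta(y),0)$, the content of the centrally symmetric box. Since $k$ is irrational, the assignment $a+kb\mapsto(a,b)$ identifies these values with the lattice points of $\lambda$ (in $0$-indexed (column, row) coordinates) and with the $\theta$-images of the lattice points of $\mu$, respectively, where $\theta$ is now viewed as the central symmetry extended to a global involution of $\mathbb Z^2$. In these terms the multiset identity becomes the identity of integer-valued functions on $\mathbb Z^2$
$$\mathbf 1_{\lambda}-\mathbf 1_{\tilde\lambda}=\mathbf 1_{\theta(\mu)}-\mathbf 1_{\theta(\tilde\mu)},$$
where each $\mathbf 1$ denotes the indicator of the corresponding point set.

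The final step is to extract the geometry. Because all four diagrams are genuine sets, each side is a difference of $\{0,1\}$-indicators, so the identity forces the positive parts to coincide and the negative parts to coincide: $\lambda\setminus\tilde\lambda=\theta(\mu\setminus\tilde\mu)$ and $\tilde\lambda\setminus\lambda=\theta(\tilde\mu\setminus\mu)$, which, after applying the involution $\theta$, is precisely \eqref{sec2}. Moreover $\theta$ maps the first quadrant into itself only on the rectangle $\pi(n,m)$, so the requirement that both sides be supported in the first quadrant forces each symmetric difference $\lambda\,\triangle\,\tilde\lambda$ and $\mu\,\triangle\,\tilde\mu$ to lie inside $\pi(n,m)$; this is exactly the condition \eqref{sec1} that $\alpha$ and $\tilde\alpha$ agree outside $\pi$. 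The converse follows by reversing each of these equivalences. I expect the main obstacle to be the bookkeeping in this last step: $\theta$ is a global involution but respects the first quadrant only on the rectangle, and it is precisely the analysis of where the supports of the indicator differences are allowed to live that separates the ``inside the rectangle'' condition \eqref{sec1} from the ``central symmetry'' condition \eqref{sec2}.
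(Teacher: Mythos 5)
Your proof is correct, and it reorganizes the argument in a genuinely different way from the paper, even though the underlying ingredients coincide. Both proofs rest on the same three facts: the reduction of $\mathcal E$-equivalence to a multiset identity via power sums (borrowed from the previous proposition), the identity $-c(y,1+k-kp_0)=c(\theta(y),0)$ at $p_0=n+k^{-1}m$ (this is the paper's (\ref{ident}), which you rederive), and the irrationality of $k$. The difference is in the bookkeeping. The paper treats the two implications separately: for necessity it runs a per-box case analysis (the ``two possibilities'' dichotomy from the previous proof), establishing (\ref{sec1}) first so that (\ref{sec2}) makes sense with $\theta$ defined only on the rectangle; for sufficiency it decomposes each of the four diagrams into three explicit disjoint pieces and matches contributions using (\ref{ident}). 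You instead extend $\theta$ to a global involution of $\mathbb Z^2$, use irrationality to encode the whole multiset identity as the single indicator identity $\mathbf 1_{\lambda}-\mathbf 1_{\tilde\lambda}=\mathbf 1_{\theta(\mu)}-\mathbf 1_{\theta(\tilde\mu)}$, and then read off (\ref{sec2}) from equality of positive and negative parts and (\ref{sec1}) from the first-quadrant support constraint (a first-quadrant point has first-quadrant $\theta$-image exactly when it lies in $\pi(n,m)$). This buys you a single chain of reversible steps, so both directions come for free at the end, and it inverts the paper's logical order: you get the central-symmetry condition globally first and deduce the ``agreement outside the rectangle'' condition from it, rather than the other way round. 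What the paper's route buys is that it never needs $\theta$ outside $\pi(n,m)$, and its disjoint-union matching is the pattern reused later in the structural analysis of equivalence classes. One small bonus of your computation: your index matching $j+\tilde j-1=m$, $i+\tilde i-1=n$ is the one consistent with the definition $\theta(ij)=(n-i+1,m-j+1)$; the paper's displayed version of this step has $n$ and $m$ transposed.
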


\begin{proof}
We will use the notations from the proof of the previous proposition. 
If $\alpha$ is equivalent to $\tilde \alpha,$ then for any $x=(ij) \in \lambda\setminus \pi(n,m)$ there is only the first possibility and therefore $x \in \tilde \lambda\setminus \pi(n,m).$ Thus $\lambda\setminus \pi(n,m) \subset \tilde \lambda\setminus \pi(n,m),$ and by symmetry $\lambda\setminus \pi(n,m) = \tilde \lambda\setminus \pi(n,m).$ Similarly we have $\mu\setminus \pi(n,m) = \tilde \mu\setminus \pi(n,m)$ and (\ref{sec1}).

From (\ref{sec1}) it follows that $\lambda\setminus \tilde \lambda$ is contained in $\pi(n,m).$ For $x=(ij) \in \lambda\setminus \tilde \lambda$ there exists only second possibility, which means that there exists
$\tilde y=(\tilde i\tilde j)\in \mu$ such that 
 $j+\tilde j-1+k(i+\tilde i-1)=kp_0=n+km.$ Since $k$ is not rational, this implies that
 $$j+\tilde j-1=n, \quad i+\tilde i-1=m,$$
 which means that $\theta(x) \in \mu \setminus \tilde \mu.$
 Similarly we have $\theta(\mu \setminus \tilde \mu) \subset \lambda\setminus \tilde \lambda.$ Since $\theta$ is an involution, this implies 
 $$\theta(\lambda\setminus \tilde \lambda)=\mu\setminus \tilde \mu.$$ By symmetry we have 
 $\theta(\tilde\lambda\setminus  \lambda)=\tilde\mu\setminus\mu.$

Conversely, assume that we have the relations (\ref{sec1}), (\ref{sec2}).
We have to show that the sequences 
$$
(c(x,0), -c(y,1+k-kp_0))_{x\in\lambda,y\in\tilde\mu},\quad (c(x,0), -c(y,1+k-kp_0))_{x\in\tilde\lambda,y\in\mu}
$$
coincide  up to a permutation. We have the disjoint unions $$\lambda=(\lambda \setminus \pi(n,m))\cup (\lambda \setminus \tilde \lambda)\cup (\lambda\cap\tilde\lambda\cap \pi(n,m)),$$
$$\tilde\mu=(\tilde\mu \setminus \pi(n,m))\cup (\tilde\mu \setminus \mu)\cup (\tilde\mu\cap\mu\cap \pi(n,m)),$$
$$\tilde\lambda=(\tilde\lambda \setminus \pi(n,m))\cup (\tilde\lambda \setminus \lambda)\cup (\lambda\cap\tilde\lambda\cap \pi(n,m)),$$
$$\mu=(\mu \setminus \pi(n,m))\cup (\mu \setminus \tilde \mu)\cup (\mu\cap\tilde\mu\cap \pi(n,m)).$$
Using this, the relations (\ref{sec1}), (\ref{sec2}) and the identity
\begin{equation}\label{ident}
c(\theta(x), 1+k-kp_0)=(m+kn-kp_0)-c(x,0), \,\, x \in \pi(n,m)
\end{equation}
we can identify the corresponding contributions in these sequences and have the result.
\end{proof}

Consider the set $\mathcal P_{n,m}$ of bipartitions $\alpha \subset \pi=(\pi(n,m),\pi(n,m)).$ 
For such partitions the equivalence relation can be described in the following simple way. 
Introduce the involution $\omega: \mathcal P_{n,m} \rightarrow \mathcal P_{n,m}$ such that for $\alpha=(\lambda,\mu)$
\begin{equation}\label{omega}
\omega (\alpha)=(\lambda, \pi(n,m) \setminus \theta(\mu)).
 \end{equation}

Introduce now another equivalence relation $\mathcal R$ on bipartitions. We say that $\alpha= (\lambda,\mu)$ is {\it $\mathcal R$-equivalent} $\tilde \alpha=(\tilde\lambda,\tilde\mu)$ if 
\begin{equation}\label{secR}
\lambda\cap\mu=\tilde\lambda\cap\tilde\mu, \,\,\, \lambda\cup\mu=\tilde\lambda\cup\tilde\mu.
 \end{equation}
 
 \begin{thm} \label{omegath}
On the set $\mathcal P_{n,m}$ the involution (\ref{omega}) transforms the equivalence relation $\mathcal E$ into $\mathcal R.$
 \end{thm}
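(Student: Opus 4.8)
The plan is to reduce the statement to a purely set-theoretic equivalence about four subsets of the rectangle $\pi(n,m)$, which can then be verified directly. Throughout write $\pi=\pi(n,m)$ and recall that on $\mathcal P_{n,m}$ every bipartition is contained in $\pi$, so that condition (\ref{sec1}) in the preceding characterisation of $\mathcal E$ holds automatically (both sides equal the empty bipartition). Hence on $\mathcal P_{n,m}$ the relation $\mathcal E$ reduces to condition (\ref{sec2}) alone, which is where I would start.

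First I would put both relations in a common language. Since $\theta$ is an involutive bijection of $\pi$, condition (\ref{sec2}) for $\alpha=(\lambda,\mu)$ and $\tilde\alpha=(\tilde\lambda,\tilde\mu)$ is equivalent, after applying $\theta$, to $\lambda\setminus\tilde\lambda=\theta(\mu)\setminus\theta(\tilde\mu)$ and $\tilde\lambda\setminus\lambda=\theta(\tilde\mu)\setminus\theta(\mu)$. Setting $A=\lambda$, $\tilde A=\tilde\lambda$, $P=\theta(\mu)$, $\tilde P=\theta(\tilde\mu)$, all subsets of $\pi$, this is
$$(\mathrm E)\colon\qquad A\setminus\tilde A=P\setminus\tilde P,\qquad \tilde A\setminus A=\tilde P\setminus P.$$
On the other side, $\omega(\alpha)=(\lambda,\pi\setminus\theta(\mu))=(A,\pi\setminus P)$ and similarly for $\tilde\alpha$. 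Because $A,\pi\setminus P\subset\pi$, one computes $\lambda\cap(\pi\setminus\theta(\mu))=A\setminus P$, while the complement in $\pi$ of $\lambda\cup(\pi\setminus\theta(\mu))$ equals $P\setminus A$; hence the relation $\mathcal R$ applied to $\omega(\alpha),\omega(\tilde\alpha)$ is equivalent to
$$(\mathrm R)\colon\qquad A\setminus P=\tilde A\setminus\tilde P,\qquad P\setminus A=\tilde P\setminus\tilde A.$$
Thus the theorem reduces to the equivalence $(\mathrm E)\Leftrightarrow(\mathrm R)$ for arbitrary subsets $A,\tilde A,P,\tilde P$ of a set.

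The conceptual point, and the only place a small idea is needed, is a transpose symmetry: interchanging $\tilde A\leftrightarrow P$ (fixing $A,\tilde P$) carries $(\mathrm E)$ into $(\mathrm R)$ and back. So it suffices to prove one implication, say $(\mathrm E)\Rightarrow(\mathrm R)$, and apply it to the transposed quadruple for the converse. To prove $(\mathrm E)\Rightarrow(\mathrm R)$ I would argue element-wise: assuming $(\mathrm E)$, put $D=A\setminus\tilde A=P\setminus\tilde P$ and $D'=\tilde A\setminus A=\tilde P\setminus P$. On $D$ a point lies in $A,P$ but not in $\tilde A,\tilde P$; on $D'$ it lies in $\tilde A,\tilde P$ but not in $A,P$; and off $D\cup D'$ one has $x\in A\Leftrightarrow x\in\tilde A$ and $x\in P\Leftrightarrow x\in\tilde P$, since $A\triangle\tilde A=D\cup D'=P\triangle\tilde P$. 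Comparing membership in $A\setminus P$ with that in $\tilde A\setminus\tilde P$ on these three regions shows the two sets coincide, and likewise $P\setminus A=\tilde P\setminus\tilde A$, giving $(\mathrm R)$. (Equivalently, one can simply run through the $2^4$ membership patterns of a point in $(A,\tilde A,P,\tilde P)$ and check that the per-point form of $(\mathrm E)$ holds precisely when that of $(\mathrm R)$ does.)

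The routine but slightly delicate part is the bookkeeping in the translation step: verifying that $\pi\setminus\theta(\mu)$ is again a partition (so $\omega$ is well defined on $\mathcal P_{n,m}$), and correctly passing to complements inside $\pi$ for the union condition of $\mathcal R$. I expect this to be the main place errors could creep in, whereas the equivalence $(\mathrm E)\Leftrightarrow(\mathrm R)$ itself, once the transpose symmetry is observed, follows from a single elementary verification and uses no properties of $k$, $p_0$ or the content function beyond those already invoked in the preceding propositions.
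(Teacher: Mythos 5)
Your proposal is correct, and it differs from the paper's argument in a useful way. The paper works directly with the $\omega$-images: assuming $\mathcal E$-equivalence (in the form (\ref{sec2})), it verifies the intersection condition (\ref{sec4}) and the union condition (\ref{sec3}) by element-chasing contradiction arguments, and then disposes of the converse with the remark that it "can be proved in a similar way." You instead translate both relations into a single set-theoretic statement about four arbitrary subsets $A,\tilde A,P,\tilde P$ of the rectangle — $(\mathrm E)\colon A\setminus\tilde A=P\setminus\tilde P,\ \tilde A\setminus A=\tilde P\setminus P$ versus $(\mathrm R)\colon A\setminus P=\tilde A\setminus\tilde P,\ P\setminus A=\tilde P\setminus\tilde A$ — and observe that the involution swapping $\tilde A\leftrightarrow P$ carries $(\mathrm E)$ to $(\mathrm R)$, so one implication yields the other for free; the forward implication then follows from your three-region decomposition (on $D$, on $D'$, and off $D\cup D'$), which I have checked and which is sound, as is your translation of the union condition of $\mathcal R$ via complementation inside $\pi(n,m)$ (this is where $\lambda,\theta(\mu)\subset\pi(n,m)$ is genuinely used). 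What your route buys is completeness and symmetry: the converse, which the paper leaves to the reader, is not a second "similar" argument but an instance of the same implication applied to the swapped quadruple, and the statement is exposed as pure set theory, independent of contents, of $k$, $p_0$, and of the Young-diagram structure. What the paper's route buys is directness: it never leaves the language of (\ref{sec2}) and so connects immediately to the preceding proposition. One cosmetic remark: the well-definedness of $\omega$ (that $\pi(n,m)\setminus\theta(\mu)$ is again a Young diagram), which you flag as a point to verify, is part of the paper's definition (\ref{omega}) rather than of this theorem, and in any case your set-theoretic equivalence does not depend on it.
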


\begin{proof} 
Let $\alpha=(\lambda,\mu)$ be $\mathcal E$-equivalent to $\tilde\alpha=(\tilde\lambda,\tilde\mu).$ 
It is enough to prove that
\begin{equation}\label{sec3}
\lambda \cup (\pi(n,m)\setminus\theta(\mu))=\tilde\lambda \cup (\pi(n,m)\setminus\theta(\tilde\mu))
 \end{equation}
and 
\begin{equation}\label{sec4}
\lambda \cap (\pi(n,m)\setminus\theta(\mu))=\tilde\lambda \cap (\pi(n,m)\setminus\theta(\tilde\mu)).
 \end{equation}
 Let's prove (\ref{sec4}). Let $x \in \lambda \cap (\pi(n,m)\setminus\theta(\mu)),$ then 
 $x \in \lambda$ and $x \notin \theta(\mu).$ 
 
 Assume that $x \notin \tilde \lambda$, then from (\ref{sec2}) it follows that $\theta(x)\in \mu$ and thus $x \in \theta(\mu).$ Contradiction means that $x \in \tilde\lambda.$ 
 
 Assume now that $x \notin \pi(n,m)\setminus \theta(\tilde\mu),$ which means that $x \in \theta(\tilde\mu).$ Since $x \notin \theta(\mu)$ we have $x \in \theta(\tilde\mu\setminus\mu).$ Using the second part of (\ref{sec2}) we see that $x \in \tilde \lambda\setminus \lambda$ and hence $x \notin \lambda$, which is a contradiction. Now (\ref{sec4}) follows from the symmetry between $\alpha$ and $\tilde \alpha.$
 The proof of (\ref{sec3}) is similar. 
 
 This proves that $\mathcal E$-equivalence implies $\mathcal R$-equivalence for $\omega$-transformed bipartitions.
 The converse claim can be proved in a similar way.
  \end{proof} 

This can be used to describe the structure of $\mathcal E$-equivalence classes of bipartitions from $\mathcal P_{n,m}$.

\begin{thm}\label{weight} Let $\alpha \in \mathcal P_{n,m}$ and $E$ be its $\mathcal E$-equivalence class. Then the following holds true:

$1)$ $E \subset \mathcal P_{n,m}.$

$2)$  $E$ contains the minimal and maximal bipartitions $\alpha_{m}, \, \alpha_M$ such that
$$
\alpha_m\subset\alpha\subset \alpha_M
$$
for any bipartition $\alpha\in E$. They can be characterised by the properties $\lambda\cap\theta(\mu)=\emptyset$ and by $\lambda \cup \theta(\mu)=\pi(n,m)$ respectively.

$3)$ Let $\alpha_m=(\lambda_m, \mu_m),\, \alpha_M=(\lambda_M, \mu_M)$ and 
\begin{equation}\label{eqv1}
\lambda_M\setminus\lambda_m=\nu_1\cup\nu_2\cup\dots\cup\nu_r,\,\,\mu_M\setminus\mu_m=\tau_1\cup\tau_2\cup\dots\cup\tau_s
\end{equation}
be the decomposition of the corresponding skew diagrams into connected components. Then $\nu_i,\tau_j\subset\pi(n,m)$, $r=s$ and, after a reordering,  $$\theta(\nu_i)=\tau_i,\,\,i=1,2,\dots,r.$$

$4)$  Every element  $\alpha$ from $E$ can be represented  uniquely in the form 
\begin{equation}\label{eqx}
\alpha=\alpha_m\cup(\nu_{a_1},\tau_{a_1})\cup(\nu_{a_2},\tau_{a_2})\cup\dots\cup(\nu_{a_l},\tau_{a_l}),
\end{equation}
where $\{a_1,\,a_2,\dots,a_l\}$ is a subset  of $\{1,\,2,\dots,r\}.$
Any set of this form is a bipartition from $E,$ so the equivalence  class $E$ contains $2^r$ elements.
\end{thm}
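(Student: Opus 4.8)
The plan is to prove Theorem~\ref{weight} by systematically exploiting Theorem~\ref{omegath}, which converts the $\mathcal E$-equivalence into the much more transparent $\mathcal R$-equivalence via the involution $\omega$. The key observation is that under $\mathcal R$-equivalence, the pair $(\lambda\cap\mu,\lambda\cup\mu)$ is an invariant, so one should first understand the structure of $\mathcal R$-classes and then transport everything back through $\omega$. I would begin by fixing $\alpha\in\mathcal P_{n,m}$ and setting $\beta=\omega(\alpha)=(\lambda,\pi\setminus\theta(\mu))$, writing $A=\lambda\cap(\pi\setminus\theta(\mu))$ and $B=\lambda\cup(\pi\setminus\theta(\mu))$ for the two $\mathcal R$-invariants. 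Since $\omega$ is an involution preserving $\mathcal P_{n,m}$, part $1)$ is immediate once we know $\mathcal R$-classes stay inside $\mathcal P_{n,m}$, which follows because $A\subset\beta\subset B\subset\pi$ forces every $\mathcal R$-equivalent bipartition to lie between $A$ and $B$ componentwise.

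For parts $2)$ and $4)$ the strategy is to describe the $\mathcal R$-class of $\beta$ explicitly. First I would prove the elementary combinatorial fact that $\tilde\beta=(\tilde\lambda,\tilde\sigma)$ is $\mathcal R$-equivalent to $\beta$ if and only if $\tilde\lambda\cap\tilde\sigma=A$, $\tilde\lambda\cup\tilde\sigma=B$, equivalently $A\subset\tilde\lambda,\tilde\sigma\subset B$ with $\tilde\lambda\cup\tilde\sigma=B$ and $\tilde\lambda\cap\tilde\sigma=A$; this means that on the ``free'' region $B\setminus A$, each box must belong to exactly one of $\tilde\lambda,\tilde\sigma$. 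Decomposing $B\setminus A$ into connected components $\gamma_1,\dots,\gamma_r$ and noting that a Young-diagram constraint forces each box of a given connected component to be assigned consistently (a box can be added to $\tilde\lambda$ only together with all boxes weakly north-west of it in the same component), I would show each component must be allocated entirely to $\tilde\lambda$ or entirely to $\tilde\sigma$. This gives a bijection between $\mathcal R$-equivalent partitions and subsets of $\{1,\dots,r\}$, hence $2^r$ elements, with the minimal and maximal elements corresponding to putting all of $B\setminus A$ into $\tilde\sigma$ (resp. $\tilde\lambda$). Translating back through $\omega$ turns these into the characterisations $\lambda\cap\theta(\mu)=\emptyset$ and $\lambda\cup\theta(\mu)=\pi(n,m)$, since $\omega$ sends $\mu\mapsto\pi\setminus\theta(\mu)$ and the intersection/union conditions swap accordingly.

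For part $3)$ the point is that the $\omega$-image of the decomposition $B\setminus A=\gamma_1\cup\dots\cup\gamma_r$ yields the skew-diagram decompositions $\lambda_M\setminus\lambda_m$ and $\mu_M\setminus\mu_m$. Because $\omega$ acts as the identity on the $\lambda$-component and as $\mu\mapsto\pi\setminus\theta(\mu)$ on the second, the free region in the $\lambda$-coordinate and the free region in the $\mu$-coordinate are related by the central symmetry $\theta$; applying $\theta$ to the connected components of one gives the connected components of the other (since $\theta$ is a diagram-reversing involution, it permutes connected components and preserves connectedness), establishing $r=s$ and, after reordering, $\theta(\nu_i)=\tau_i$. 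The containments $\nu_i,\tau_j\subset\pi(n,m)$ follow from part $2)$ together with the fact already noted that $\lambda_M\setminus\lambda_m$ and $\mu_M\setminus\mu_m$ lie in $\pi(n,m)$.

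The main obstacle I anticipate is the claim that each connected component of the free region $B\setminus A$ must be assigned wholesale to $\tilde\lambda$ or to $\tilde\sigma$: this requires carefully combining the Young-diagram (staircase) shape condition with connectivity, to rule out ``mixed'' assignments within a single component, and it is here that the geometry of the rectangle $\pi(n,m)$ and the precise definition of connectedness of skew shapes enter essentially. Once this structural lemma is in hand, parts $1)$, $2)$ and $4)$ are formal consequences, and part $3)$ reduces to tracking how $\theta$ and $\omega$ interact with the decomposition.
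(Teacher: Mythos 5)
Your overall route is the same as the paper's: the published proof of Theorem \ref{weight} consists precisely of ``apply the involution $\omega$ and Theorem \ref{omegath}, then do the geometric analysis of the Young diagrams (Fig.~1)'', and your structural lemma --- that in the $\mathcal R$-picture every box of the free region $B\setminus A$ lies in exactly one of $\tilde\lambda,\tilde\sigma$, and that adjacency combined with the Young-diagram (north-west closure) condition forces each connected component to be allocated wholesale --- is exactly the ``simple geometric analysis'' the paper leaves implicit. Your treatment of parts $2)$--$4)$ is sound: the inclusion order on the second component is reversed by $\omega$, so the minimal (resp.\ maximal) element of $E$ corresponds to $\tilde\sigma=B$, $\tilde\lambda=A$ (resp.\ $\tilde\sigma=A$, $\tilde\lambda=B$), giving the characterisations $\lambda\cap\theta(\mu)=\emptyset$ and $\lambda\cup\theta(\mu)=\pi(n,m)$; moreover $\mu_M\setminus\mu_m=\theta(\lambda_M\setminus\lambda_m)$, which yields $r=s$ and $\theta(\nu_i)=\tau_i$; and the converse half of part $4)$ (that every union of $A$ with a subcollection of components is again a Young diagram, so the class really has $2^r$ elements) follows from the same north-west closure argument you indicate.

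There is, however, one circular step: your argument for part $1)$. The involution $\omega$ is defined only on $\mathcal P_{n,m}$, and Theorem \ref{omegath} identifies $\mathcal E$- with $\mathcal R$-equivalence only for pairs of bipartitions that \emph{both} lie in $\mathcal P_{n,m}$. Consequently, the observation $A\subset\beta\subset B\subset\pi$ shows that $\mathcal R$-classes of elements of $\mathcal P_{n,m}$ stay inside $\mathcal P_{n,m}$, but this only constrains those members of $E$ already known to lie in $\mathcal P_{n,m}$; it cannot exclude a hypothetical $\tilde\alpha\in E$ with $\tilde\alpha\not\subset\pi$, for which $\omega(\tilde\alpha)$ is not even defined. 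Since your entire strategy of transporting $E$ through $\omega$ presupposes $E\subset\mathcal P_{n,m}$, part $1)$ must be established independently. The repair is the one the paper uses: by relation (\ref{sec1}) of Proposition 2.2, any $\tilde\alpha$ that is $\mathcal E$-equivalent to $\alpha$ satisfies $\tilde\alpha\setminus\pi=\alpha\setminus\pi=\emptyset$, hence $\tilde\alpha\in\mathcal P_{n,m}$. With that one-line fix inserted at the start, the rest of your proof goes through.
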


\begin{proof} The first part follows immediately from (\ref{sec1}). Applying the involution $\omega$ and the previous theorem we have the remaining claims using simple geometric analysis of the corresponding Young diagrams (see Fig. 1).
\end{proof}

\begin{figure}
\centerline{ \includegraphics[width=12cm]{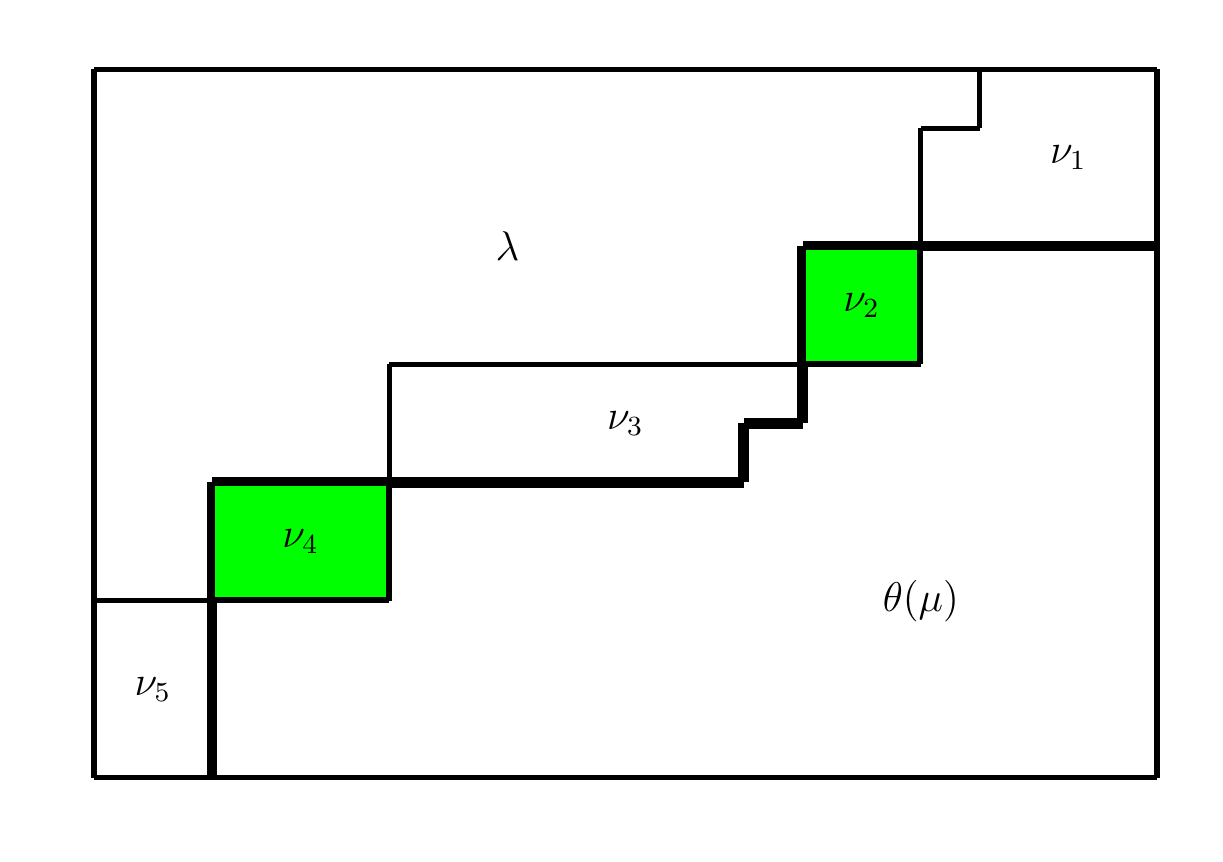}} \label{Geom}
\caption{Intersection of $\lambda$ and $\theta(\mu)$ (shaded) in the rectangle $\pi((n,m)$ and the corresponding connected components $\nu_i.$ The boundary of $\theta(\mu)$ is shown in bold.} \end{figure}

To describe $\mathcal E$-equivalence class for general  bipartition $\alpha=(\lambda,\mu)$ denote by $\alpha_{\pi}$ the bipartition $\alpha_{\pi}=\alpha\cap\pi=(\lambda_{\pi},\mu_{\pi}):$
$$
(\lambda_{\pi},\mu_{\pi})=(\lambda\cap\pi(n,m),\mu\cap\pi(n,m)).
$$ 

\begin{corollary} \label{cor1} Let $E(\alpha_{\pi})$ be the $\mathcal E$-equivalence class of $\alpha_{\pi}$. Then 
$\mathcal E$-equivalence class of $\alpha$ can be described as
$$
E(\alpha)=\{\gamma=\beta \cup (\alpha\setminus\pi)\in\mathcal P\times\mathcal P \mid \beta\in E(\alpha_{\pi})\}.
$$ 
$E(\alpha)$ contains the minimal and maximal bipartitions $\alpha_{m}, \, \alpha_M$ such that
$$
\alpha_m\subset\alpha\subset \alpha_M
$$
with parts $3)$ and $4)$ of theorem \ref{weight} remaining valid for any bipartition $\alpha.$ 
\end{corollary}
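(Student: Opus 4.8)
The plan is to reduce everything to Theorem \ref{weight} applied to the inside bipartition $\alpha_\pi=\alpha\cap\pi$, exploiting that the outside part $\alpha\setminus\pi$ stays frozen along the whole $\mathcal E$-equivalence class. The starting point is additivity of the eigenvalues of Theorem \ref{Bern}: since $b_r(\alpha,k,p_0)$ in (\ref{bal}) is a sum over the boxes of $\lambda$ and $\mu$, the disjoint splittings $\lambda=\lambda_\pi\sqcup(\lambda\setminus\pi(n,m))$ and $\mu=\mu_\pi\sqcup(\mu\setminus\pi(n,m))$ give, by the same formula (\ref{bal}) for the sub-bipartition $\alpha_\pi$,
\[
b_r(\alpha,k,p_0)=b_r(\alpha_\pi,k,p_0)+d_r,
\]
where $d_r$ collects the contributions of the boxes of $\alpha$ lying outside $\pi(n,m)$ and hence depends only on $\alpha\setminus\pi$. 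By the Proposition characterising $\mathcal E$-equivalence through (\ref{sec1}), any $\gamma$ equivalent to $\alpha$ satisfies $\gamma\setminus\pi=\alpha\setminus\pi$; conversely, if $\gamma\setminus\pi=\alpha\setminus\pi$ then $d_r$ agrees for $\gamma$ and $\alpha$, so additivity turns $b_r(\gamma,k,p_0)=b_r(\alpha,k,p_0)$ into $b_r(\gamma_\pi,k,p_0)=b_r(\alpha_\pi,k,p_0)$. Writing $\beta=\gamma_\pi$ and $\gamma=\gamma_\pi\cup(\gamma\setminus\pi)$, this is exactly the stated description of $E(\alpha)$; the condition $\gamma\in\mathcal P\times\mathcal P$ is the requirement that $\beta\cup(\alpha\setminus\pi)$ be an honest bipartition, and it is essential, since a general $\beta\in E(\alpha_\pi)$ need not fit together with the outside boxes into a pair of Young diagrams.

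Next I would make this validity condition explicit. By Theorem \ref{weight} for $\alpha_\pi\in\mathcal P_{n,m}$, the class $E(\alpha_\pi)$ is the Boolean cube on the paired components $(\nu_1,\tau_1),\dots,(\nu_r,\tau_r)$, with $\beta=\beta(S)=(\alpha_\pi)_m\cup\bigcup_{i\in S}(\nu_i,\tau_i)$ for a unique $S\subseteq\{1,\dots,r\}$ (cf. (\ref{eqx})). A direct check shows that $\beta(S)\cup(\alpha\setminus\pi)$ is a pair of Young diagrams if and only if its $\lambda$-part contains the shadow $\mathrm{Sh}_\lambda=\{x\in\pi(n,m):x\le y\ \text{for some}\ y\in\lambda\setminus\pi(n,m)\}$ (in the componentwise box order for which Young diagrams are lower sets), and likewise for $\mu$. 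Since $(\alpha_\pi)_m$ is fixed and the components enter all-or-nothing, this reduces to the combinatorial condition $S\supseteq S^{+}$, where
\[
S^{+}=\{\,i:\ \nu_i\cap\mathrm{Sh}_\lambda\neq\emptyset\ \text{ or }\ \tau_i\cap\mathrm{Sh}_\mu\neq\emptyset\,\}.
\]
Thus under $\beta\leftrightarrow\beta\cup(\alpha\setminus\pi)$ the class $E(\alpha)$ is the up-set (sub-cube) $\{S:\ S\supseteq S^{+}\}$.

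The structural claims then follow formally. This sub-cube has least element $S=S^{+}$ and greatest element $S=\{1,\dots,r\}$, and adjoining the fixed outside part is order-preserving, so
\[
\alpha_m=\beta(S^{+})\cup(\alpha\setminus\pi),\qquad \alpha_M=(\alpha_\pi)_M\cup(\alpha\setminus\pi)
\]
are the minimal and maximal bipartitions of $E(\alpha)$; as $\alpha$ itself is valid, its index satisfies $S_\alpha\supseteq S^{+}$, whence $\alpha_m\subset\alpha\subset\alpha_M$. Because the outside part cancels in the difference, $\alpha_M\setminus\alpha_m=\bigcup_{i\notin S^{+}}(\nu_i,\tau_i)$, so the connected components of the skew diagrams $\lambda_M\setminus\lambda_m$ and $\mu_M\setminus\mu_m$ are precisely the surviving $\nu_i,\tau_i$ with $i\notin S^{+}$, still matched by $\theta(\nu_i)=\tau_i$; this is parts $3)$ and $4)$ of Theorem \ref{weight} with $r$ replaced by $r-|S^{+}|$.

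The heart of the argument, and the step I expect to require the most care, is the middle one: translating the Young-diagram requirement for $\beta(S)\cup(\alpha\setminus\pi)$ into $S\supseteq S^{+}$. One must verify that each outside box forces exactly the completion of its rectangular shadow inside $\pi(n,m)$, that this shadow meets the removable components in whole components only (so the forcing is all-or-nothing), and that the $\lambda$- and $\mu$-constraints are governed by the same subset $S$ because the components are glued in the pairs $(\nu_i,\tau_i)$ with $\theta(\nu_i)=\tau_i$. This is the elementary geometry underlying Figure 1, but it is where the real content lies; everything else is bookkeeping on the Boolean cube.
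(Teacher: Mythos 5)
Your proposal is correct and takes essentially the approach the paper intends (the corollary is stated there without proof): you freeze the outside part $\alpha\setminus\pi$ --- via the characterisation of $\mathcal E$-equivalence through (\ref{sec1}), (\ref{sec2}), or equivalently your additivity of the eigenvalues $b_r$ over the splitting $\alpha=\alpha_\pi\sqcup(\alpha\setminus\pi)$ --- and then reduce all structural claims to Theorem \ref{weight} for $\alpha_\pi$ plus the elementary geometry of which $\beta\in E(\alpha_\pi)$ glue with $\alpha\setminus\pi$ into honest Young diagrams, which your shadow criterion $S\supseteq S^{+}$ settles correctly and which also explains the paper's remark that $E(\alpha)\cap\mathcal P_{n,m}$ may be strictly smaller than $E(\alpha_\pi)$. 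One small point: the all-or-nothing forcing comes from the fact that elements of $E(\alpha_\pi)$ contain each component $\nu_i$ entirely or not at all (Theorem \ref{weight}, part 4), exactly as in your middle paragraph, and not from the shadow meeting the $\nu_i$ in whole components as your closing paragraph suggests --- that stronger claim is unnecessary and need not hold.
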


Note that $E(\alpha) \cap \mathcal P_{n,m} \subset E(\alpha_{\pi})$ in general does not coincide with $E(\alpha_{\pi}).$

\section{Translation functors and regular basis}

In \cite{SV6} we have introduced the Jack-Laurent symmetric functions $P_{\alpha}=P^{(k,p_0)}_{\alpha}\in \Lambda^{\pm}$ indexed by bipartition $\alpha =(\lambda,\mu).$
As we have shown they are well defined provided $k$ is not rational and $p_0 \ne n+k^{-1}m$ with $n,m \in \mathbb Z_{>0}.$
Equivalently, we can consider $P_{\alpha}$ as elements of $\Lambda^{\pm}_{p_0}=\Lambda^{\pm}\otimes \mathbb C(p_0)$, where $\mathbb C(p_0)$ is the field of rational functions of $p_0.$

Now we are going to study what happens when $p_0=n+k^{-1}m$ assuming that $k,n,m$ are fixed with $k$ not rational and  $n,m \in \mathbb Z_{>0}.$ 
Then $P_{\alpha}^{(k,p_0)}$ as functions of $p_0$  may have pole at $p_0=n+k^{-1}m$ depending on the choice of bipartition $\alpha.$

The aim of this section  is to construct a basis in $\Lambda^{\pm}$ which is regular at $p_0=n+k^{-1}m$. 
More precisely, we will define the Laurent symmetric functions $Q_{\alpha}=Q^{(k,p_0)}_{\alpha}\in \Lambda^{\pm}$, which are regular at  $p_0=n+k^{-1}m$,
such that for any $\alpha$
$$
Q_{\alpha}=\sum_{\beta \in E(\alpha), \, \beta \subset \alpha} a_{\beta\alpha}P_{\beta}
$$
with some coefficients $a_{\beta\alpha}=a_{\beta\alpha}(k,p_0)$ which are rational functions of $p_0.$

In order to do this we are going to produce some family  of linear transformations   $\mathcal F_{E,F}$
acting on  $\Lambda^{\pm}_{p_0}$ which are similar to the {\it translation functors} in the representation theory \cite{BG,Z}.

Let  $E$ be an $\mathcal E$-equivalence class of bipartitions and $V_{E} \subset \Lambda^{\pm}_{p_0}$ be the linear span over $\mathbb C(p_0)$ of $P_{\alpha}$  with $\alpha\in E.$
We have the decomposition of vector spaces over $\mathbb C(p_0)$
$$
\Lambda^{\pm}_{p_0}=\bigoplus_{E}V_{E},
$$
where the sum is taken over all $\mathcal E$-equivalence classes of bipartitions.

Denote by $Pr_E$ the projector onto the subspace $V_{E}$ with respect to this decomposition and define for any $\mathcal E$-equivalence classes $E$ and $F$ the linear map
\begin{equation}\label{tf}
\mathcal F_{E,F}(f):=Pr_{F}(p_1f), \, f \in V_{E}.
\end{equation}
The next result is quite simple but very important.

\begin{proposition} \label{nonsing}Let $f\in V_{E}$ and  suppose that $f$ has no pole at $p_0=n+k^{-1}m$. Then 
for any $\mathcal E$-equivalence class $F$ the function
$
\mathcal F_{E,F}(f)
$
also has no pole at $p_0=n+k^{-1}m$.
\end{proposition}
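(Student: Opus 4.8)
The plan is to show that multiplication by $p_1$ moves a Jack--Laurent symmetric function $P_\alpha$ into a controlled finite combination of neighbouring $P_\beta$, and then to argue that the projection $Pr_F$ can only recombine these pieces in a way that preserves regularity. The crucial input is the Pieri formula for Jack--Laurent symmetric functions from \cite{SV6}, which expresses $p_1 P_\alpha$ as a linear combination
$$
p_1 P_\alpha=\sum_{\beta} c_{\alpha\beta}(k,p_0)\,P_\beta,
$$
where $\beta$ runs over bipartitions obtained from $\alpha$ by adding a single box (to either $\lambda$ or $\mu$), and the coefficients $c_{\alpha\beta}(k,p_0)$ are explicit rational functions of $p_0$. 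I would first record exactly where these coefficients can have poles at $p_0=n+k^{-1}m$: the denominators in the Pieri rule are products of factors of the form $1+k-kp_0+c(x,0)-c(y,0)$ (content differences), so a pole appears precisely when adding the box $\beta\setminus\alpha$ creates a content collision of the type analysed in Proposition 2.4, i.e. exactly when $\beta$ lands in the same $\mathcal E$-equivalence class as some other bipartition reachable from $\alpha$. The point is that the pole-producing transitions are exactly the ones that mix different boxes within a single equivalence class.

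The key structural observation I would then exploit is the interplay between the equivalence classes and the box-adding operation. Since the $\mathcal B^{(r)}$ are CMS integrals, multiplication by $p_1$ followed by projection onto a fixed class $F$ is compatible with the eigenvalue bookkeeping $b_r(\alpha,k,p_0)$; concretely, I would use that $\operatorname{Pr}_F(p_1 P_\alpha)$ equals the sum of those Pieri terms $c_{\alpha\beta}P_\beta$ for which $\beta\in F$. Thus
$$
\mathcal F_{E,F}(P_\alpha)=\sum_{\beta\in F,\ \beta=\alpha+\text{box}}c_{\alpha\beta}(k,p_0)\,P_\beta.
$$
The danger is twofold: a coefficient $c_{\alpha\beta}$ may have a pole, and a summand $P_\beta$ may itself have a pole. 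I would show these two sources of singularity are forced to cancel. When a single box is added, at most one content collision can occur, so at most one of the added boxes $\beta$ can be ``resonant''; the residue of the pole of $c_{\alpha\beta}$ along that resonant transition is, by the explicit Pieri data, proportional to the residue governing the pole of $P_\beta$, and the two combine into a regular expression. This is the representation-theoretic heart of the translation-functor analogy: applying a regular operator (multiplication by $p_1$, which has polynomial, hence regular, matrix entries in the $p_a$-basis) to a regular vector $f$ produces a regular vector, and projecting onto a block cannot introduce new poles because the block projectors are idempotents built from the spectral decomposition.

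This suggests the cleanest route: rather than tracking residue cancellations box-by-box, I would argue abstractly. Multiplication by $p_1$ is a $\mathbb C(p_0)$-linear endomorphism of $\Lambda^\pm_{p_0}$ whose matrix in the power-sum basis $\{p_a\}$ has entries that are \emph{polynomial} in $p_0$ (in fact independent of $p_0$), so it certainly has no pole at $p_0=n+k^{-1}m$; hence if $f$ is regular then $p_1 f$ is regular. It remains to see that the projector $Pr_F$ preserves regularity. Here I would invoke Theorem \ref{Bern}: the eigenvalues $b_r(\alpha,k,p_0)$ are \emph{polynomial} in $p_0$, and the distinct eigenvalue-vectors $(b_r(\alpha))_r$ separating different $\mathcal E$-classes remain distinct in a neighbourhood of $p_0=n+k^{-1}m$ for generic directions (the classes are defined exactly by coincidence of these polynomials at the special value, so off the special value the spectral projectors are given by Lagrange-type interpolation formulas with poles only where the eigenvalues collide). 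The subtle part—and the main obstacle—is precisely this: the block projectors $Pr_F$ are built from the CMS integrals and \emph{do} acquire poles at $p_0=n+k^{-1}m$ because eigenvalues collide there. So I cannot simply say ``regular times regular is regular.'' Instead I would argue that the combination $Pr_F(p_1 f)$ is regular by showing that the potential pole of $Pr_F$ is annihilated when applied to $p_1 f$, using that $p_1 f$ lies, modulo regular terms, in the image of the degeneration and that the limiting projector is well defined on the $p_0$-regular subspace spanned by the regular combinations $Q_\alpha$ constructed in the same section. This reduces the statement to the existence and regularity of the $Q_\alpha$ basis, which is the content of the Pieri-based construction the section is building toward, so the proof is essentially a bootstrapping argument combined with the polynomial dependence of $p_1$-multiplication on $p_0$.
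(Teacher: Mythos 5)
There is a genuine gap, and it comes in two parts. First, your abstract route founders on a factual misreading of the structure of the problem: you assert that the block projectors $Pr_F$ ``do acquire poles at $p_0=n+k^{-1}m$ because eigenvalues collide there.'' But eigenvalue collisions at the special value occur only \emph{within} an $\mathcal E$-equivalence class, never \emph{between} two distinct classes --- that is precisely the definition of the relation $\mathcal E$: two bipartitions lie in different classes exactly when some integral $\mathcal B^{(r)}$ separates their eigenvalues \emph{at} $p_0=n+k^{-1}m$. This is the key point the paper exploits. Since (as you note via the Pieri formula) $p_1V_E$ lies in a finite sum $V_F\oplus V_{E_1}\oplus\dots\oplus V_{E_L}$ of classes, one can build, for each $E_i$, an operator $\mathcal C_i=$ (normalized polynomial in $\mathcal D_i=f_i(\mathcal B^{(r_i)})$), where $f_i$ vanishes on the $E_i$-eigenvalues and $r_i$ is chosen so that $\mathcal B^{(r_i)}$ separates $F$ from $E_i$ at the special point; by Cayley--Hamilton $\mathcal C_i$ kills $V_{E_i}$ and is the identity on $V_F$, and its coefficients are regular at $p_0=n+k^{-1}m$ because the normalizing denominator $\sigma_N=\prod_j f_i\bigl(b_{r_i}(\alpha_j,k,p_0)\bigr)$ is non-vanishing there --- again by the separation of distinct classes at the special point. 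Then $Pr_F(p_1f)=\mathcal C_1\cdots\mathcal C_L(p_1f)$ is manifestly regular, since the $\mathcal B^{(r)}$ depend polynomially on $p_0$ and $p_1f$ is regular. This realization of the projector (restricted to $p_1V_E$) as a \emph{regular} polynomial in the CMS integrals is exactly the step missing from your argument.

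Second, your fallback is circular. You propose to reduce the regularity of $Pr_F(p_1f)$ to the fact that ``the limiting projector is well defined on the $p_0$-regular subspace spanned by the regular combinations $Q_\alpha$.'' But the $Q_\alpha$ are constructed in Theorem \ref{regul} by an induction whose every step applies Proposition \ref{nonsing} (through the operators $\mathcal F_x$) to conclude that the newly built function is regular; the proposition must therefore be established before, and independently of, the existence of the $Q_\alpha$ basis. The earlier residue-cancellation sketch does not rescue this: besides never being carried out, it analyses $p_1P_\alpha$ term by term, whereas the statement concerns an arbitrary regular $f\in V_E$, whose expansion in the $P_\alpha$ may have singular coefficients that cancel only in the sum, so a box-by-box treatment of individual $P_\alpha$'s does not suffice.
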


\begin{proof} We have
\begin{equation}\label{deco}
p_1V_E\subset V_F\oplus V_{E_1}\oplus\dots\oplus V_{E_L}
\end{equation}
where $F,\,E_1,\dots, E_L$ are different classes of equivalence. First we will construct linear operator  $\mathcal C_1$ which polynomially depends on CMS integrals $\mathcal B^{(r)}$ with coefficients having no poles at $p_0=n+k^{-1}m$ and such that 
$$
\mathcal C_1(V_{E_1})=0,\,\, \mathcal C_1(v)=v,\,\, v\in V_F
$$
Let $\alpha_1,\dots,\alpha_N$ be all bipartitions in $F$ and $\beta_1,\dots,\beta_M$ all bipartitions in $E_1$. Then by definition of the equivalence classes there is  $r_1 \in \mathbb Z_{>0}$ such that 
$$
b_{r_1}(\alpha_1, k, p_0) \ne b_{r_1}(\beta_j,  k, p_0), \,\,\, j=1,\dots,M
$$
when $p_0= n+k^{-1}m.$
Let
$$
f_1(t)=\prod_{j=1}^M(t-b_{r_1}(\beta_j, k , p_0)),
$$
then operator $\mathcal D_1=f_1(\mathcal B^{(r_1)})$, where $\mathcal B^{(r)}$ are the CMS integrals from Theorem 
\ref{Bern},
acts as zero in $V_{E_1}$ and in $V_F$ as a diagonal operator 
$$
\mathcal D_1P_{\alpha_i}=g_1(\alpha_i, k, p_0)P_{\alpha_i}, \,\, i=1,\dots, N,
$$
where $g_1(\alpha_i, k, p_0)=f_1(b_{r_1}(\alpha_i, k, p_0)).$
Now having in mind Cayley-Hamilton theorem we can define 
$$
\mathcal C_1=(-1)^{N+1}\frac{1}{\sigma_N}\left(\mathcal D_1^{N}-\sigma_1\mathcal D_1^{N-1}+\dots+(-1)^{N-1}\sigma_{N-1}\mathcal D_1\right)
$$
where $\sigma_1,\dots,\sigma_N$ stand for the elementary symmetric polynomials in 
$$g_1(\alpha_1,k, p_0), \,\dots \, ,g_1(\alpha_N,k, p_0).$$
 From our assumptions we see that 
$\sigma_N=g_1(\alpha_1, k, p_0)\dots g_1(\alpha_N, k, p_0)\ne 0$
when $p_0=k^{-1}n+m.$
We see that   $\mathcal C_1(V_{E_1})=0$ and  by the Cayley-Hamilton theorem 
$\mathcal C_1$ acts as the identity in $V_{F}$. 

In the same way we can construct operators $\mathcal C_2,\dots\mathcal C_L$ and define
$$
\mathcal C=\mathcal C_1\mathcal C_2\dots\mathcal C_L.
$$
Let 
$
p_1f=g+g_1+\dots+g_L
$
be the decomposition according to (\ref{deco}).
Applying to both sides of this equality the operator $\mathcal C$ we get 
$$
\mathcal C(p_1f)=g = Pr_F(p_1f).
$$
But, since $\mathcal B^{(r)}$ are polynomial in $p_0,$ $\mathcal C$ is a differential operator with coefficients that have no poles at $p_0=n+k^{-1}m$, so  both sides must be regular at this point.
\end{proof}

The following definition is motivated by the Pieri formula for Jack--Laurent symmetric functions \cite{SV6}. Let $\alpha =(\lambda,\mu) \in \mathcal P_{n,m}$ be a bipartition inside $\pi.$

For any box $x\in\pi(n,m)$  
 define the set of bipartitions $S_x(\alpha)$ as  
 $$
 S_x(\alpha)=\{(\lambda\cup x,\mu),\,(\lambda,\mu\setminus \theta(x))\}
 $$
assuming that $x\notin\lambda$  and $ \lambda\cup x$ is a Young  diagram, and that $\theta(x) \in \mu$ and $\mu\setminus \theta(x)$ is a Young diagram (otherwise the corresponding element is dropped from the set).

Let us denote by $X(\alpha)$ the set of all bipartitions in the right hand side of the Pieri formula (see formula (56) from \cite{SV6}):
$X(\alpha)$ is the set of all bipartions $\beta=(\tilde\lambda, \tilde \mu)$ 
such that $\alpha$ can be obtained from $\beta$ by deleting a box from $\tilde\lambda$ or adding a box to $\tilde\mu.$

\begin{proposition} \label{Eqv} Let $E$ be an $\mathcal E$-equivalence class and suppose that there is $\alpha\in E$ such that   $S_x(\alpha)$  is not empty. Then there exists a unique $\mathcal E$-equivalence class $E_x$  different from $E$ such that  for any $\alpha\in E$ 
$$
X(\alpha)\cap E_x=S_x(\alpha).
$$
\end{proposition}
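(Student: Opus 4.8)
The plan is to analyze the $\mathcal E$-equivalence class of each bipartition in $X(\alpha)$ using the content formula~(\ref{bal}) and the combinatorial description of $\mathcal E$-equivalence classes from Proposition~2.2 (relations (\ref{sec1})--(\ref{sec2})). First I would recall that $X(\alpha)$ consists of bipartitions $\beta=(\tilde\lambda,\tilde\mu)$ obtained from $\alpha=(\lambda,\mu)$ either by deleting a box from $\tilde\lambda$ (so $\tilde\lambda=\lambda\cup y$ for some addable box $y$ of $\lambda$) or by adding a box to $\tilde\mu$ (so $\tilde\mu=\mu\setminus z$ for some removable box $z$ of $\mu$). For each such $\beta$ I would compute the eigenvalue difference $b_r(\beta,k,p_0)-b_r(\alpha,k,p_0)$ and observe that it depends only on the single content being added or removed: either $c(y,0)^{r-1}$ in the $\lambda$-part, or $(-1)^r c(z,1+k-kp_0)^{r-1}$ in the $\mu$-part. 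Thus the $\mathcal E$-class of $\beta$ is governed by a single content value, and two elements $\beta,\beta'\in X(\alpha)$ lie in the same $\mathcal E$-class precisely when their associated contents agree as signed contents in the sense of the sequence-matching argument already used in the proof of Proposition~2.2.

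Next I would show that the only way two elements of $X(\alpha)$ can become $\mathcal E$-equivalent at the special value $p_0=n+k^{-1}m$ is through the resonance relation~(\ref{sec}), namely $kp_0=j+\tilde j-1+k(i+\tilde i-1)$, which forces $j+\tilde j-1=n$ and $i+\tilde i-1=m$ since $k$ is not rational. This is exactly the condition that pairs a box $x=(ij)$ in the $\lambda$-part with $\theta(x)=(n-i+1,m-j+1)$ in the $\mu$-part. Consequently the box $y$ added to $\lambda$ and the box $\theta(y)$ removed from $\mu$ give rise to $\mathcal E$-equivalent bipartitions, and these are precisely the two members of $S_x(\alpha)$ with $x=y$: the element $(\lambda\cup x,\mu)$ and the element $(\lambda,\mu\setminus\theta(x))$. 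I would verify directly from~(\ref{bal}) using the identity~(\ref{ident}) that these two bipartitions indeed have equal values of $b_r$ for all $r$ at the special value of $p_0$, confirming that $S_x(\alpha)$ lies in a single $\mathcal E$-class.

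I would then define $E_x$ to be this common $\mathcal E$-equivalence class and establish the two required properties. For uniqueness and the equality $X(\alpha)\cap E_x=S_x(\alpha)$, I would argue that among all the single-content modifications comprising $X(\alpha)$, the ones falling into the class $E_x$ are exactly those whose added or removed content matches, via the resonance~(\ref{sec}), the content determined by the box $x$; by the non-rationality of $k$ this content determines $x$ uniquely, so no spurious elements of $X(\alpha)$ can enter $E_x$. The independence of the answer from the choice of $\alpha\in E$ follows because moving from $\alpha$ to an $\mathcal E$-equivalent $\tilde\alpha$ only redistributes boxes according to~(\ref{sec2}), i.e.\ via $\theta$-paired boxes in the rectangle $\pi(n,m)$, which preserves the content data that defines $E_x$; here I would invoke part~1) of Theorem~\ref{weight} to stay inside $\mathcal P_{n,m}$. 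That $E_x\ne E$ is clear since passing from $\alpha$ to an element of $S_x(\alpha)$ strictly changes $\lambda$ or $\mu$ by one box and hence changes the multiset of contents.

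The main obstacle I anticipate is the bookkeeping required to show that $X(\alpha)\cap E_x$ contains \emph{no} elements beyond the two in $S_x(\alpha)$ and that $E_x$ is genuinely well-defined independent of $\alpha$. The subtlety is that a single content value of a box near the boundary of the rectangle might a priori be matched by more than one resonance, or that for different representatives $\tilde\alpha\in E$ the set of addable/removable boxes changes shape; ruling these out cleanly requires the geometric picture of $\mathcal E$-classes as $\theta$-symmetric configurations inside $\pi(n,m)$ (Figure~1 and parts~3)--4) of Theorem~\ref{weight}), together with the fact that the decomposition into connected skew components $\nu_i,\tau_i$ with $\theta(\nu_i)=\tau_i$ controls precisely which box-moves stay within a given class. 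I expect this geometric rigidity, rather than any hard estimate, to be what makes the uniqueness statement hold.
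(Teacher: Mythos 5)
Your proposal is correct in substance, but it takes a genuinely different route from the paper's. You work directly with the eigenvalue sequences (\ref{bal}): every element of $S_x(\alpha)$ has eigenvalues $b_r(\alpha,k,p_0)+c(x,0)^{r-1}$ --- immediate for the addition $(\lambda\cup x,\mu)$, and for the removal $(\lambda,\mu\setminus\theta(x))$ it follows from (\ref{ident}), which at $kp_0=m+kn$ gives $c(\theta(x),1+k-kp_0)=-c(x,0)$ and hence $-(-1)^r c(\theta(x),1+k-kp_0)^{r-1}=c(x,0)^{r-1}$. Since an $\mathcal E$-class is by definition determined by this sequence and $b_r$ is constant on $E$, the class $E_x$ is pinned down as the class with eigenvalues $b_r(E)+c(x,0)^{r-1}$, so well-definedness over all $\alpha\in E$ is automatic; and your content-matching argument (irrationality of $k$ forces $y=x$ for additions and $z=\theta(x)$ for removals, via the resonance (\ref{sec})) gives $X(\alpha)\cap E_x=S_x(\alpha)$, including the degenerate case where $S_x(\alpha')=\emptyset$ for some $\alpha'\in E$. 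The paper instead applies the involution $\omega$ of (\ref{omega}) and Theorem \ref{omegath} to convert $\mathcal E$ into the set-theoretic relation $\mathcal R$ of (\ref{secR}), and verifies the required equalities of unions and intersections by a short case analysis ($x\in\mu$ versus $x\notin\mu$). What each buys: your computation makes the equality $X(\alpha)\cap E_x=S_x(\alpha)$ fully explicit (the paper dismisses it as ``easy to check''), and it shows that the geometric machinery you worry about in your last paragraph --- Theorem \ref{weight} parts 3)--4) and Figure 1 --- is not needed at all: irrationality of $k$ plus eigenvalue bookkeeping suffices. The paper's route avoids content computations entirely and stays uniform with the set-theoretic technique used throughout Section 2.

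One step of yours needs repair as stated: to get $E_x\ne E$ you argue that adding or removing a box ``changes the multiset of contents.'' That inference is invalid in exactly this setting --- having different content multisets while sharing all eigenvalues $b_r$ is precisely what members of a nontrivial $\mathcal E$-class do (e.g.\ $(\lambda\cup x,\mu)$ and $(\lambda,\mu\setminus\theta(x))$ themselves). The one-line fix inside your own framework is the case $r=1$: the eigenvalue shift $c(x,0)^{r-1}$ equals $1$ at $r=1$, i.e.\ every element of $S_x(\alpha)$ has $b_1=|\lambda|-|\mu|+1\ne b_1(\alpha)$, so $E_x\ne E$. (The paper derives the same point by a contradiction argument in the $\mathcal R$-picture.)
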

\begin{proof}  
Let us prove first that if $\alpha$ is $\mathcal E$-equivalent to $\tilde\alpha$ then $S_x(\alpha)$ and 
$S_x(\tilde\alpha)$ belong to the same  $\mathcal E$-equivalence class. Applying the involution $\omega$ we reduce this to the following statement. Let $\omega(\alpha)=(\lambda,\mu), \omega(\tilde\alpha)=(\tilde\lambda,\tilde\mu)$ and 
$$\lambda\cup\mu=\tilde\lambda\cup\tilde\mu, \, \lambda\cap\mu=\tilde\lambda\cap\tilde\mu.$$
Without loss of generality we can assume that the box $x$ can be added to $\lambda$ and $\tilde\lambda$. We need to prove that 
$$(\lambda\cup x)\cup\mu=(\tilde\lambda\cup x)\cup\tilde\mu, \, \, \, (\lambda\cup x)\cap\mu=(\tilde\lambda\cup x)\cap\tilde\mu.$$
The first equality is obvious. To prove the second consider two cases: $x \notin\mu$ and $x \in\mu.$

If $x \notin\mu$ then $x \notin\lambda\cup\mu=\tilde\lambda\cup\tilde\mu,$ hence $x \notin \tilde\mu,$
which implies that $(\lambda\cup x)\cap\mu=(\tilde\lambda\cup x)\cap\tilde\mu.$

If $x\in \mu$ then $x \in\lambda\cup\mu=\tilde\lambda\cup\tilde\mu,$ and hence $x \in \tilde\mu.$
Therefore $$(\lambda\cup x)\cap\mu=\lambda\cap\mu=\tilde\lambda\cap\tilde\mu=(\tilde\lambda\cup x)\cap\tilde\mu.$$
Hence there exists a unique equivalence class $E_x$ containing the union of $S_x(\alpha), \, \alpha \in E.$ The relation $X(\alpha)\cap E_x=S_x(\alpha)$ is easy to check.

We only left to prove that these equivalence classes $E$ and $E(x)$ are different. Suppose that $(\lambda,\mu)$ and $(\lambda\cup x,\mu)$ are $\mathcal R$-equivalent. Then we have 
 $$
 \lambda\cup\mu=\lambda\cup x\cup\mu,\,\,  \lambda\cap\mu=(\lambda\cup x)\cap\mu,
 $$
implying that $x \in \lambda,$ which is a contradiction.  \end{proof}

For any box $x\in\pi(n,m)$  
 define now the set of bipartitions $S^x(\alpha)$ as  
 $$
 S^x(\alpha)=\{(\lambda\setminus x,\mu),\,(\lambda,\mu\cup \theta(x))\}.
 $$
 In the same way as in proposition \ref{Eqv} it can be proven that 
 there exists a unique
 $\mathcal E$-equivalence class $E^x$, which contains $S^x(\alpha)$ for any $\alpha \in E.$
 
Let $x\in\pi(n,m)$. Denote by $\mathcal F_x$ the linear transformation defined by
$$
\mathcal F_x=\mathcal F_{E, E_x}.
$$

The following proposition is based on the Pieri formula for Jack-Laurent symmetric functions \cite{SV6}.
Introduce the following functions for bipartition $\alpha=(\lambda,\mu)$ and box $x=(ij)$:
\begin{equation}\label{U}
U(x,\alpha; p_0)=U_1(x,\alpha)U_2(x,\alpha; p_0)U_3(x,\alpha; p_0),
\end{equation}
\begin{equation}\label{Ux}
U_1(x,\alpha)=\prod_{r=i+1}^{l(\mu)}\frac{c_{\mu}(jr,1+k)c_{\mu}(jr,-k)}{c_{\mu}(jr,1)c_{\mu}(jr,0)},
\end{equation}
\begin{equation}\label{Uy}
U_2(x,\alpha; p_0)=\prod_{r=1}^{l(\lambda)}\frac{c_{\alpha}(jr,-1-k(p_{0}+2))c_{\alpha}(jr,-kp_{0})}{c_{\alpha}(jr,-1-k(p_{0}+1))c_{\alpha}(jr,-k(p_{0}+1))},
\end{equation}
\begin{equation}\label{Uz}
U_3(x,\alpha; p_0)=\frac{(j-1+k(l(\lambda)+\mu^{\prime}_{j}-p_{0}-1))(j+k(\mu^{\prime}_{j}-l(\mu)))}
{(j+k(l(\lambda)+\mu^{\prime}_{j}-p_{0}))(j-1+k(\mu^{\prime}_{j}-l(\mu)-1))},
\end{equation}
where
$$
c_{\lambda}(jr,a)=\lambda_{r}-j-k(\lambda^{\prime}_{j}-r)+a,\;\;
$$
$$
c_{\alpha}(jr,a)=\lambda_{r}+j+k(\mu^{\prime}_{j}+r)+a,
$$
and $\lambda^{\prime}$ as before is the Young diagram conjugated (transposed) to $\lambda$.

\begin{proposition} \label{FX}
The action of $\mathcal F_x$ on Jack-Laurent symmetric functions can be described by
\begin{equation}\label{dec}
\mathcal F_x(P_{\lambda,\mu})=V(x,\lambda,\mu)P_{\lambda\cup x,\mu}+U(\theta(x),\lambda,\mu; p_0)P_{\lambda,\mu \setminus \theta(x)},
\end{equation}
where 
\begin{equation}\label{Vx}
V(x,\lambda,\mu)=\prod_{r=i+1}^{l(\mu)}\frac{c_{\mu}(jr,1+k)c_{\mu}(jr,-k)}{c_{\mu}(jr,1)c_{\mu}(jr,0)},\,\,x=(ij)
\end{equation}
and $U(x,\lambda,\mu; p_0)$ is defined by (\ref{U}).
\end{proposition}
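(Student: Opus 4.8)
The plan is to read off the formula directly from the Pieri rule of \cite{SV6} and then apply the projector $Pr_{E_x}$. First I would recall the Pieri formula (formula (56) in \cite{SV6}), which expands $p_1 P_{\lambda,\mu}$ as a linear combination
$$
p_1 P_{\lambda,\mu}=\sum_{\beta\in X(\alpha)}c_\beta\, P_\beta
$$
of Jack--Laurent symmetric functions indexed by the bipartitions $\beta\in X(\alpha)$. By the description of $X(\alpha)$ above, these $\beta$ are precisely the bipartitions $(\lambda\cup y,\mu)$ obtained by adding an addable box $y$ to $\lambda$, together with the bipartitions $(\lambda,\mu\setminus z)$ obtained by removing a removable box $z$ from $\mu$. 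In the notation of (\ref{Vx}) and (\ref{U})--(\ref{Uz}) the coefficient of the first type is $c_{(\lambda\cup y,\mu)}=V(y,\lambda,\mu)$ and of the second type is $c_{(\lambda,\mu\setminus z)}=U(z,\lambda,\mu;p_0)$, so that the Pieri formula takes the form
$$
p_1 P_{\lambda,\mu}=\sum_{y}V(y,\lambda,\mu)P_{\lambda\cup y,\mu}+\sum_{z}U(z,\lambda,\mu;p_0)P_{\lambda,\mu\setminus z}.
$$

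By the definition (\ref{tf}) we have $\mathcal F_x(P_{\lambda,\mu})=Pr_{E_x}(p_1P_{\lambda,\mu})$. Since the spaces $V_E$ attached to distinct $\mathcal E$-equivalence classes are spanned by the corresponding $P_\beta$ and give the direct sum decomposition $\Lambda^{\pm}_{p_0}=\bigoplus_E V_E$, the projector $Pr_{E_x}$ simply retains those terms $c_\beta P_\beta$ with $\beta\in E_x$ and annihilates the rest. This is where Proposition \ref{Eqv} enters: it asserts that among all $\beta\in X(\alpha)$ exactly the two bipartitions in $S_x(\alpha)=\{(\lambda\cup x,\mu),\,(\lambda,\mu\setminus\theta(x))\}$ lie in $E_x$. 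Hence only the term with $y=x$ in the first sum and the term with $z=\theta(x)$ in the second sum survive the projection, which gives
$$
\mathcal F_x(P_{\lambda,\mu})=V(x,\lambda,\mu)P_{\lambda\cup x,\mu}+U(\theta(x),\lambda,\mu;p_0)P_{\lambda,\mu\setminus\theta(x)},
$$
as claimed.

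The only genuine work lies in the first step, namely matching the raw coefficients of the Pieri formula from \cite{SV6} with the explicit products $V$ and $U$ defined in (\ref{Vx}) and (\ref{U})--(\ref{Uz}); this is a direct rewriting in terms of the content functions $c_\lambda$ and $c_\alpha$, using that adding a box to $\lambda$ contributes precisely the factor $U_1$ (note that (\ref{Vx}) and (\ref{Ux}) coincide) while removing a box from $\mu$ contributes the full product $U_1U_2U_3$. I expect the main subtlety to be keeping track of the box $\theta(x)$ versus $x$ in the second term, where the central symmetry $\theta$ and the content identity (\ref{ident}) are what make the two surviving coefficients land on the correct boxes. Once the Pieri coefficients are written in the displayed form, the projection argument is purely formal and relies only on Proposition \ref{Eqv}.
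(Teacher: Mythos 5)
Your proof is correct and follows exactly the paper's argument: the paper's own proof simply states that the result follows immediately from Proposition \ref{Eqv} and the Pieri formula of \cite{SV6}, which is precisely the expansion-then-projection argument you spell out. Your elaboration --- expanding $p_1 P_{\lambda,\mu}$ via the Pieri rule, noting that $Pr_{E_x}$ kills all terms outside $E_x$, and invoking $X(\alpha)\cap E_x=S_x(\alpha)$ to isolate the two surviving terms --- is a faithful unpacking of that one-line proof.
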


\begin{proof}This follows immediately from proposition \ref{Eqv} and Pieri formula for Jack--Laurent symmetric functions \cite{SV6}.
\end{proof}

\begin{lemma}\label{main}
Let us assume that the box $\theta(x)=(n-i+1, \, m-j+1), \, x=(ij)$ can be removed from $\mu,$ then the following hold true:

$1)$ If $\lambda_{i-1}=j-1$, or $\lambda_{i+1}=j$ then the numerator of the function $U(\theta(x),\lambda,\mu; p_0)$ has zero of the first order at $p_0=n+k^{-1}m$;

$2)$ If $\lambda_{i}=j$, or $\lambda_{i}=j-1$, or $j=1, \, i=l(\lambda)+1$, then the denominator of the function $U(\theta(x),\lambda,\mu; p_0)$ has zero of the first order at $p_0=k^{-1}n+m.$

In all other cases neither numerator nor denominator of  $U(\theta(x),\lambda,\mu; p_0)$ has zero at $p_0=k^{-1}n+m.$
\end{lemma}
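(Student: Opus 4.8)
The statement is essentially a pole-order analysis of the explicit rational function $U(\theta(x),\lambda,\mu;p_0)$ defined in (\ref{U})--(\ref{Uz}), evaluated at the special point $p_0=n+k^{-1}m$. Let me work through the structure of this function to plan the proof.

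The function $U$ factors as $U_1 U_2 U_3$. The first factor $U_1$ is independent of $p_0$, so it contributes nothing to poles or zeros at the special value. The dependence on $p_0$ sits entirely in $U_2$ (a product over rows of $\lambda$) and $U_3$ (a single ratio). The key observation must be that at $p_0=n+k^{-1}m$, we have $kp_0 = kn+m$, which is exactly the resonance condition making various linear-in-$p_0$ factors vanish. Since $k$ is irrational, a factor like $j-1+k(\text{integer}-p_0)$ or $j+k(\cdots-p_0)$ can vanish at $p_0=n+k^{-1}m$ only when both the rational part and the $k$-coefficient part match up to give $kp_0=kn+m$.

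Let me think carefully about which factors can vanish. The plan is to go through the numerator and denominator factors of $U_2$ and $U_3$ one at a time, substitute $p_0=n+k^{-1}m$ (equivalently $kp_0=kn+m$), and determine when each linear factor becomes zero. Using irrationality of $k$, a factor $a+bk$ with $a,b$ integers vanishes iff $a=b=0$. So I would rewrite each factor in $U_2, U_3$ in the form $a+k(b-p_0)$ or $a+k(b+p_0^{-1}\cdots)$, collect the condition $kp_0=kn+m$, and reduce each vanishing condition to a pair of integer equations. The hypotheses in cases 1) and 2) — conditions like $\lambda_{i-1}=j-1$, $\lambda_i=j$, $\lambda_i=j-1$, $j=1,i=l(\lambda)+1$ — are precisely the combinatorial translations of these integer equations, expressing that a particular row-length of $\lambda$ aligns with the box $x=(ij)$ and its neighbors.

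Here is the concrete sequence I would follow. First, substitute $x\mapsto\theta(x)=(n-i+1,m-j+1)$ into the formulas, being careful that the arguments $j$, $i$, $\mu_j'$, $l(\mu)$, $l(\lambda)$ in (\ref{Ux})--(\ref{Uz}) are now read off for the box $\theta(x)$ in the bipartition $(\lambda,\mu)$; this relabeling is where bookkeeping errors are most likely. Second, I would analyze $U_3$: its numerator has the factor $(j-1+k(l(\lambda)+\mu_j'-p_0-1))$, which under $kp_0=kn+m$ vanishes exactly when the integer part $j-1-m$ and the $k$-part $l(\lambda)+\mu_j'-n-1$ both vanish, i.e. a resonance tied to the row alignment; similarly for the other three factors of $U_3$. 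Third, I would examine $U_2$: each factor of the product involves $c_\alpha(jr,a)=\lambda_r+j+k(\mu_j'+r)+a$ with shifts $a\in\{-1-k(p_0+2),\,-kp_0,\,-1-k(p_0+1),\,-k(p_0+1)\}$, and again impose $kp_0=kn+m$ to find the vanishing rows $r$. The goal is to show that after all cancellations, the combined numerator of $U$ acquires a simple zero precisely under the conditions of 1), the denominator acquires a simple zero precisely under 2), and otherwise $U$ is a nonzero finite value.

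The main obstacle will be controlling the telescoping cancellations inside the product $U_2$ and matching them against the single factors of $U_3$. Because $U_2$ is a ratio of two products over $r$, many factors cancel in pairs across adjacent rows, and only boundary terms survive; I expect that the surviving boundary factors combine with $U_3$ to produce exactly one net linear-in-$p_0$ factor in the numerator or denominator. Verifying that these are simple (order-one) zeros, and that the four listed geometric conditions are mutually exclusive and exhaust all vanishing cases, is the delicate combinatorial heart of the argument. I would organize this by computing, for each candidate condition, the exact multiplicity contributed by $U_2$ and by $U_3$ separately and then adding, checking that the total is $+1$ (numerator zero) in case 1), $-1$ (denominator zero, i.e. a pole) in case 2), and $0$ otherwise. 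Throughout, irrationality of $k$ is the essential tool that decouples the rational and $k$-linear parts of every factor and guarantees that no accidental cancellations or higher-order coincidences occur.
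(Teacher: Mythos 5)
Your plan is essentially the paper's own proof: note that $U_1$ is $p_0$-independent, substitute $\theta(x)$ (where removability of $\theta(x)$ from $\mu$ gives $\mu'_{m-j+1}=n-i+1$), rewrite each linear factor of $U_2$ and $U_3$ in terms of the local parameter $\delta=m+kn-kp_0$, and use irrationality of $k$ to split each vanishing condition into two integer equations, which pin down a unique row $r$ (namely $r=i-1$ or $r=i+1$ for the numerator factors, $r=i$ for the denominator factors of $U_2$, plus the $U_3$ case $j=1,\ i=l(\lambda)+1$) and translate exactly into the listed conditions on $\lambda$. The one obstacle you anticipate --- telescoping cancellations inside the product $U_2$ --- never materializes: because $r$ is pinned uniquely for each factor type and the conditions within each of cases 1) and 2) are incompatible with one another (a partition cannot satisfy, e.g., both $\lambda_{i-1}=j-1$ and $\lambda_{i+1}=j$), at most one factor of the numerator and at most one factor of the denominator vanish at the special value, which is precisely why all zeros are of the first order.
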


\begin{proof} 
Note that $U_1$ does not depend on $p_0.$
Introduce the new variable $\delta=n+km-kp_0.$ Since the box $\theta(x)=(i'j')$ can be removed from $\mu$ we  have $\mu'_{j'}=i'=n-i+1$ and 
$$c_{\alpha}(j'r,a-kp_0)=\lambda_{r}+j'+k(\mu^{\prime}_{j'}+r)+a-kp_0=\lambda_r-j-k(i-1-r)+\delta +1 +a.$$
The second factor $c_{\alpha}(jr,-kp_{0})$ in the numerator of $U_2$ corresponds to $a=0$ and thus equals to $\lambda_r-j-k(i-1-r)+\delta +1.$ Since $k$ is assumed not rational the condition $\delta =0$ gives $r=i-1$ and $\lambda_r=j-1$ and thus $\lambda_{i-1}=j-1$, which is the first condition in case $1)$. Similarly one can check the rest.
\end{proof}

Let $E$ be an $\mathcal E$-equivalence class consisting of more than one element and $(\lambda_M,\mu_M)$, $(\lambda_m,\mu_m)$ be the  maximal and the minimal bipartitions in it. Let us choose $x\in \lambda _M\setminus\lambda_m$  such that $\lambda_M\setminus x$ is a partition and let  $\nu$ be the connected component containing $x$.  Let  $\alpha=(\lambda,\mu)\in E$ then it is easy to check that  $\mu\cup\theta(x)$ is a partition if and only if $\lambda\cap\nu=\emptyset$. Therefore for any $\alpha\in E$ we can define a map $\psi : E \rightarrow  E^x$ by
\begin{equation}\label{psi}
\psi(\alpha)=\begin{cases}(\lambda\setminus x,\mu),\,\,\nu\subset \lambda \\(\lambda,\mu\cup\theta(x)),\,\,\nu\cap \lambda=\emptyset. \end{cases}
\end{equation}
It is easy to see that $\psi$ preserves the inclusions of bipartitions.

\begin{lemma}\label{cases} 
The following statements hold true:

$(1)$ If $\nu\setminus x$ is nonempty and connected then $\psi$ is a bijection  and  for any $\alpha\in E$
$$
\mathcal F_x(P_{\psi(\alpha)})=d(x,p_0,\alpha)P_{\alpha},
$$
where $d(x,p_0,\alpha)$ is nonzero rational function in $p_0$ which has neither zero nor pole at $p_0=n+k^{-1}m$.

$(2)$ If $\nu\setminus x=\nu_1\cup\nu_2$ is nonempty and  not connected then  $\psi$ is injective and for any $\alpha\in E$
$$
\mathcal F_x(P_{\psi(\alpha)})=d(x,p_0,\alpha)P_{\alpha},
$$
where $d(x,p_0\alpha)$ has  zero of the first order at $p_0=n+k^{-1}m$ if $\lambda\cap\nu=\emptyset$ and $d(x,p_0,\alpha)$ has  neither zero nor pole at $p_0=n+k^{-1}m$ if $\lambda\supset\nu$. If $\gamma\in E^x$ and $\gamma\notin Im\,\psi$ then
$
\mathcal F_x(P_{\gamma})=0.
$

$(3)$  If $\nu\setminus x=\emptyset$ is empty then  $\psi$ is surjective such that for any $\gamma\in E^x$ 
$$
\psi^{-1}(\gamma)=\{ \alpha,\alpha\cup (x,\theta(x))\}
$$
 and 
$$
\mathcal F_x(P_{\gamma})=d(x,p_0,\alpha\cup x)P_{\alpha\cup (x,\theta(x))}+d(x,p_0,\alpha)P_{\alpha},
$$
where $d(x,p_0,\alpha\cup x)$ has neither zero nor pole and $d(x,p_0,\alpha)$  has a pole of the first order  at $p_0=n+k^{-1}m.$ 
\end{lemma}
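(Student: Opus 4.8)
The plan is to combine three ingredients: the explicit Pieri-type formula for $\mathcal F_x$ from Proposition \ref{FX}, the combinatorial description of equivalence classes in Theorem \ref{weight} and Corollary \ref{cor1}, and the local zero/pole analysis of Lemma \ref{main}. The starting observation is that $\mathcal F_x$ sends $V_{E^x}$ into $V_E$: if $\beta\in S^x(\alpha)$ then $\alpha\in S_x(\beta)$, so $(E^x)_x=E$ and applying $Pr_E$ to $p_1P_{\psi(\alpha)}$ is legitimate. Hence by Propositions \ref{Eqv} and \ref{FX}, $\mathcal F_x(P_{\psi(\alpha)})$ is the sum over $S_x(\psi(\alpha))$ of the corresponding Pieri coefficients.

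First I would settle the combinatorics of $\psi$. By Theorem \ref{weight}(4) each $\alpha\in E$ is obtained from $\alpha_m$ by switching on a subset of the components $(\nu_i,\tau_i)$; in particular the chosen component satisfies either $\nu\subset\lambda$ or $\nu\cap\lambda=\emptyset$, so $\psi$ is well-defined. Applying the structure theorem to $E^x$ (whose minimal and maximal elements are $\psi(\alpha_m)$ and $\psi(\alpha_M)$, since $\psi$ preserves inclusions) I would track how deleting $x$ transforms $\nu$: in case $(1)$ it is replaced by the single connected component $\nu\setminus x$, so $E$ and $E^x$ have the same number of components and $\psi$ is a bijection; in case $(2)$ it splits into the two components $\nu_1,\nu_2$, so $E^x$ has one extra component, $\psi$ embeds $E$ as the configurations in which $\nu_1$ and $\nu_2$ are switched simultaneously, and for the remaining $\gamma\in E^x$ (where $\nu_1,\nu_2$ disagree) one checks $X(\gamma)\cap E=\emptyset$, whence $\mathcal F_x(P_\gamma)=0$; in case $(3)$ the component disappears, $E^x$ has one fewer component and $\psi$ is the two-to-one map with fibre $\{\alpha,\alpha\cup(x,\theta(x))\}$.

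Next come the coefficients. Evaluating Proposition \ref{FX} at $\psi(\alpha)$ produces two terms. When $\nu\subset\lambda$ we have $\psi(\alpha)=(\lambda\setminus x,\mu)$ and the term $P_{(\lambda\setminus x)\cup x,\mu}=P_\alpha$ appears with coefficient $V(x,\lambda\setminus x,\mu)$; when $\nu\cap\lambda=\emptyset$ we have $\psi(\alpha)=(\lambda,\mu\cup\theta(x))$ and $P_\alpha$ appears with coefficient $U(\theta(x),\lambda,\mu\cup\theta(x);p_0)$. The companion term is $P_{\lambda\setminus x,\mu\setminus\theta(x)}$, respectively $P_{\lambda\cup x,\mu\cup\theta(x)}$; since it corresponds to switching the component $\nu$ only partially, it lies in $E$ precisely when $\nu=\{x\}$, and is otherwise annihilated by $Pr_E$. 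This already yields the one-term formula in cases $(1),(2)$ and the two-term formula in case $(3)$, and identifies $d(x,p_0,\alpha)$ with $V$ in the component-on situation — where it is independent of $p_0$ (since $V=U_1$), hence regular, and nonzero by the Pieri formula together with the non-rationality of $k$ — and with $U$ in the component-off situation.

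Finally the orders of zero and pole, which is where the real work lies, and which I expect to be the main obstacle. Here I feed the geometry of $x=(ij)$ into Lemma \ref{main}. As $x$ is a removable corner of $\lambda_M$, its only possible neighbours inside $\lambda_M$ are $(i-1,j)$ and $(i,j-1)$, and whether each lies in $\nu$ is exactly what distinguishes the three cases; moreover any box of $\lambda_M$ adjacent to $x$ that is not in $\lambda_m$ must lie in $\nu$, which pins down $\lambda_{i-1}$ and $\lambda_i$ for every $\alpha\in E$. A short case check then shows: in case $(2)$ with $\nu\cap\lambda=\emptyset$ one gets $(i-1,j-1)\notin\nu$, forcing $\lambda_{i-1}=j-1$ (numerator condition) while $\lambda_i\le j-2$ excludes the denominator conditions, so $U$ has a simple zero; in case $(3)$ one gets $\lambda_i=j-1$ (denominator) but $\lambda_{i-1}\ge j$ (no numerator), so $U$ has a simple pole; and in case $(1)$ either neither condition holds, or, in the vertical-strip configuration, both $\lambda_{i-1}=j-1$ and $\lambda_i=j-1$ hold so the simple zero and simple pole cancel — in every instance $d$ has neither zero nor pole. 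The degenerate situations with $j=1$ are covered by the third alternative in Lemma \ref{main}. The hard part is precisely this bookkeeping: establishing the exact correlation between the connectivity type of $\nu\setminus x$, the position of $\lambda$ relative to $\nu$, and the pair of conditions in Lemma \ref{main}, including the cancellation that keeps $d$ regular in case $(1)$.
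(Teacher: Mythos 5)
Your proposal is correct and follows essentially the same route as the paper: expand $\mathcal F_x(P_{\psi(\alpha)})$ via Proposition \ref{FX}, use the component structure of Theorem \ref{weight} to settle the combinatorics of $\psi$, and feed the position of $x$ into Lemma \ref{main} to get the zero/pole orders, including the zero--pole cancellation in case $(1)$; the paper writes out only case $(1)$ and declares the rest analogous, and your sketch of cases $(2)$--$(3)$ matches that analysis. One mechanism should be repaired, though: when $\nu\setminus x\neq\emptyset$ the companion term is not a function ``annihilated by $Pr_E$'' --- by Proposition \ref{Eqv} every valid member of $S_x(\psi(\alpha))$ already lies in $E$, so the projector never kills a term of $S_x$; the companion is absent because it is not a bipartition at all (a box of $\nu$ adjacent to $x$ forces a box of $\mu$ immediately to the right of or below $\theta(x)$, so $\theta(x)$ is not a removable corner of $\mu$, and symmetrically $x$ is not addable to $\lambda$ in the component-off case), hence no such term occurs in the Pieri expansion in the first place.
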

\begin{proof}
Let $x=(ij).$ 
Consider the case $(1).$ 
If $\psi(\alpha)=(\lambda\setminus x,\mu)$ then $d(x,p_0)=V(x,\psi(\alpha))$
and the claim follows. If $\psi(\alpha)=(\lambda,\mu\cup\theta(x))$ then $d(x,p_0)=U(\theta(x),\psi(\alpha); p_0).$ We claim that $U(\theta(x),\psi(\alpha); p_0)$ has no zero or pole at $p_0=k^{-1}n+m.$ Indeed, according to lemma (\ref{main}) we should show that none of the relations in the lemma are satisfied. The last relation $j=1, \, i=l(\lambda)+1$ is impossible since $\nu\setminus x$ is nonempty.
To check the rest note that since $\nu\cap \lambda=\emptyset$ we have $j=\lambda_i+\nu_i.$
If $\lambda_i=j$ then $\nu_i=0$ which is impossible. If $\lambda_{i+1}=j$ we have $\lambda_i\geq \lambda_{i+1}=j,$ which implies $\nu_i \leq 0$, which is also impossible.
If $\lambda_i=j-1$ and $\lambda_{i-1}=j-1$ simultaneously then the zero in the denominator cancels the zero in the numerator and we have the claim. If at the least one of these relations are not valid then we have the strict inequality $\lambda_{i-1}>\lambda_i.$ Let $\lambda_{i-1}=j-1$, then $\lambda_{i-1}=\lambda_i+\nu_i-1,$ which implies that $\nu_i>1$. It is easy to see that this contradicts to the connectivity assumption of $\nu\setminus x.$ The last case to check is when $\lambda_i=j-1, \lambda_{i-1}>\lambda_i.$ This case contradicts to the connectivity of $\nu.$ 
This proves the lemma in case (1).
The remaining cases can be proved in the same way.
\end{proof}

\begin{thm}  
\label{regul}
Let $\alpha\in \mathcal P_{n,m}$ and $E$ be the $\mathcal E$-equivalence class containing $\alpha$, $k \notin \mathbb Q$ be fixed. Then  there are rational functions  $a_{\beta\alpha}(p_0)$  with $\beta\in E,\,\, \beta\subset \alpha$ such that $a_{\alpha\alpha}=1$  and $a_{\beta\alpha}(p_0)$ has a pole at $p_0=n+k^{-1}m$ of order, which is  equal to the number of connected components in $\alpha\setminus \beta,$ and such that the linear combination of Jack-Laurent symmetric functions
$$
Q_{\alpha}=\sum_{\beta\in E,\,\, \beta\subset \alpha}a_{\beta\alpha}(p_0)P_{\beta}
$$
is regular at $p_0=n+k^{-1}m.$
\end{thm}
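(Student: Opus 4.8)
The plan is to build the functions $Q_\alpha$ recursively from the translation functors $\mathcal F_x$, exploiting that these preserve regularity. The decisive input is Proposition \ref{nonsing}: if $f\in V_E$ has no pole at $p_0=n+k^{-1}m$, then neither does $\mathcal F_{E,F}(f)$. I would therefore induct on the total size $|\alpha|=|\lambda|+|\mu|$, constructing $Q_\alpha$ for every bipartition of a given size out of those of strictly smaller size. The base case is the minimal bipartition $\alpha_m$ of each class, where the sum defining $Q_{\alpha_m}$ collapses to the single term $P_{\alpha_m}$; here I would argue that $P_{\alpha_m}$ is already regular, because no bipartition strictly below $\alpha_m$ is $\mathcal E$-equivalent to it, so the recursive construction of $P_{\alpha_m}$ in \cite{SV6} involves no denominator $b_r(\alpha_m,k,p_0)-b_r(\beta,k,p_0)$ that vanishes at $p_0=n+k^{-1}m$.

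For the inductive step, given $\alpha\neq\alpha_m$ I would choose a removable corner box $x$ of $\lambda$ lying in $\lambda\setminus\lambda_m$; such a box exists because $\alpha\neq\alpha_m$ forces $\lambda\supsetneq\lambda_m$, and one checks (working with the conjugate diagram) that the skew shape $\lambda/\lambda_m$ always contains a removable corner of $\lambda$. Setting $\gamma=(\lambda\setminus x,\mu)$ we have $\gamma\in E^x$ and $|\gamma|<|\alpha|$, so $Q_\gamma$ is available by induction, regular, and with the asserted pole orders. Applying $\mathcal F_x$ and using Proposition \ref{FX}, the coefficient of $P_\alpha$ in $\mathcal F_x(Q_\gamma)$ equals $V(x,\gamma)$, a nonzero constant independent of $p_0$; I would then set
$$
Q_\alpha=\frac{1}{V(x,\gamma)}\,\mathcal F_x(Q_\gamma).
$$
By Proposition \ref{nonsing} this is regular, its coefficient of $P_\alpha$ is $1$, and since $\mathcal F_x$ preserves the relevant inclusions, every remaining term $P_\beta$ occurring in it satisfies $\beta\in E$ and $\beta\subset\alpha$.

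It remains to track the orders of the poles, and this is where Lemma \ref{main} enters and where I expect the real work to lie. Expanding $\mathcal F_x(Q_\gamma)=\sum_{\gamma'\subset\gamma}a_{\gamma'\gamma}\mathcal F_x(P_{\gamma'})$ through Proposition \ref{FX}, each term $P_\beta$ receives contributions of two kinds: one from adding the box $x$ back to the first diagram, carrying the regular nonzero factor $V$, and one from removing $\theta(x)$ from the second diagram, carrying the factor $U(\theta(x),\,\cdot\,;p_0)$. Lemma \ref{main} identifies exactly when $U$ has a simple pole at $p_0=n+k^{-1}m$, namely when the removal detaches a new connected component, and shows it is otherwise regular and nonzero. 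Combining this with the inductive pole orders of the coefficients of $Q_\gamma$ and the elementary geometry relating the connected components of $\alpha\setminus\beta$ to those of $\gamma\setminus\gamma'$ under the operations $\lambda\mapsto\lambda\cup x$ and $\mu\mapsto\mu\setminus\theta(x)$, one should obtain that $a_{\beta\alpha}$ has a pole of order exactly the number of connected components of $\alpha\setminus\beta$. I expect the main obstacle to be precisely this accounting: showing that the simple poles contributed by $U$ accumulate additively with the inductively known orders, that they occur exactly at the component-splitting steps predicted by Lemma \ref{main}, and that no unexpected cancellation lowers the order below the asserted connected-component count.
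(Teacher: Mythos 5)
Your inductive skeleton coincides with the paper's: pick a removable box $x$ of $\lambda$ inside $\lambda\setminus\lambda_m$, pass to $\gamma=(\lambda\setminus x,\mu)\in E^x$, apply $\mathcal F_x$ to $Q_\gamma$, normalize by the $p_0$-independent factor $V(x,\gamma)$, and get regularity from Proposition \ref{nonsing}. However, two steps do not hold up. The first is the base case. You assert that $P_{\alpha_m}$ is regular because no bipartition below $\alpha_m$ is $\mathcal E$-equivalent to it, ``so the recursive construction of $P_{\alpha_m}$ in \cite{SV6} involves no denominator $b_r(\alpha_m,k,p_0)-b_r(\beta,k,p_0)$ that vanishes at $p_0=n+k^{-1}m$.'' That inference is invalid: non-equivalence of $\beta$ and $\alpha_m$ only guarantees that \emph{some} $r$ separates their eigenvalues at the special value, while a denominator attached to a \emph{fixed} $r$ (which is what a recursive/Gram--Schmidt construction produces) can perfectly well vanish there for a non-equivalent $\beta$; for instance $b_2(\alpha_m)-b_2(\beta)$ vanishes at $p_0=n+k^{-1}m$ as soon as the row- and column-sums of the removed boxes match, which is far weaker than $\mathcal E$-equivalence. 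To repair this one needs a separating operator built from the whole algebra of integrals (the Cayley--Hamilton trick inside the proof of Proposition \ref{nonsing}), plus structural facts about the ordering used in \cite{SV6} that you do not verify. The paper sidesteps all of this: it starts from $P_{\emptyset,\mu}=P^*_\mu$, which is independent of $p_0$, and applies $\mathcal F_{x_1},\dots,\mathcal F_{x_N}$ over the boxes of $\lambda$; minimality ($\lambda\cap\theta(\mu)=\emptyset$) forces $\theta(x_i)\notin\mu$ at every step, so by Proposition \ref{FX} no $U$-term ever appears, and $\mathcal F_{x_N}\cdots\mathcal F_{x_1}(P_{\emptyset,\mu})=V_N\cdots V_1\,P_{\alpha_m}$ with each $V_i$ a nonzero constant in $p_0$, whence $P_{\alpha_m}$ is regular by Proposition \ref{nonsing}.

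The second gap is the quantitative half of the theorem, which you explicitly defer (``I expect the main obstacle to be precisely this accounting''): the claim is not merely that a regular combination exists, but that $a_{\beta\alpha}$ has a pole of order \emph{exactly} the number of connected components of $\alpha\setminus\beta$. The paper's accounting rests on two points you never establish. First, each $P_\beta$ with $\beta\in E$, $\beta\subset\alpha$, receives a contribution from exactly \emph{one} term of $Q_{\psi(\alpha)}$, namely $P_{\psi(\beta)}$: by Theorem \ref{weight} an element of $E$ contains a pair $(\nu_j,\theta(\nu_j))$ either entirely or not at all, so the $V$-channel (adding $x$) and the $U$-channel (removing $\theta(x)$) can never both produce the same $P_\beta$; this yields the recursion $a_{\beta\alpha}=\frac{d(x,p_0,\beta)}{d(x,p_0,\alpha)}\,a_{\psi(\beta)\psi(\alpha)}$. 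Second, one needs the three-case analysis of Lemma \ref{cases} ($\nu\setminus x$ nonempty connected, disconnected, or empty), which matches the order of zero or pole of $d(x,p_0,\beta)$ against the change in the number of connected components between $\alpha\setminus\beta$ and $\psi(\alpha)\setminus\psi(\beta)$ --- including the case where $d$ has a first-order \emph{zero} (when $\nu\setminus x$ disconnects and $\lambda_\beta\cap\nu=\emptyset$) and the case where it has a first-order \emph{pole} (when $\nu=\{x\}$ and $(x,\theta(x))\not\subset\beta$). Citing Lemma \ref{main} alone cannot close this: that lemma only analyses the factor $U$, and the geometric dictionary ``pole of $U$ equals detaching a component'' that you attribute to it is exactly the content of Lemma \ref{cases}, which your proposal never invokes.
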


\begin{proof} Let us prove theorem by induction on $|\lambda\setminus\lambda_m|$.

If $|\lambda\setminus\lambda_m|=0$ then   $\alpha=\alpha_m =(\lambda,\mu)$ and in this case the theorem states that $P_{\alpha_m}$ is regular at $p_0=k^{-1}n+m.$ By part $2)$  of theorem \ref{weight}  we have $\lambda\cap\theta(\mu)=\emptyset$. Let $x_1,\dots,x_N$ be all boxes of the diagram $\lambda$ beginning from the first box of the first row and ending by the last box of the last row. Consider the following function
$$
Q=\mathcal F_{x_N}\dots\mathcal F_{x_1}(P_{\emptyset,\mu}).
$$
We have  $P_{\emptyset,\mu}=P^*_{\mu}$ is dual to Jack symmetric function and thus does not depend on $p_0$. Therefore by proposition \ref{nonsing} $Q$ has no poles at $p_0=n+k^{-1}m.$ Moreover, since $\lambda\cap\theta(\mu)=\emptyset$ by proposition \ref{FX} and lemma \ref{main}  
we have
$$
Q=V_N\dots V_1 P_{\alpha_m},
$$
where $V_i,\, i=1,\dots, N$ do not depend on $p_0$, and thus $P_{\alpha_m}$ is regular at $p_0=n+k^{-1}m.$

Now suppose  that $\alpha\in E$ and $\alpha\ne\alpha_m$. Therefore there exists a connected component $\nu\subset\lambda_M\setminus \lambda_m,\, \nu \subset \lambda$. Let us pick $x\in\nu$ such that $\lambda\setminus x$ is a Young diagram. It is easy to see that $\lambda_M\setminus x$ is also a Young diagram. 

Consider three different possibilities as in lemma \ref{cases}.
In all three cases
$$
\psi(\alpha)=(\lambda\setminus x,\mu),\quad \psi(\alpha_m)=(\lambda_m,\mu\cup \theta(x))
$$ 
and $\lambda_m\cap( \theta(\mu)\cup x)=\emptyset$. Therefore  $\psi(\alpha_m)$ is the minimal element in $E^x$ and $\mid(\lambda\setminus x)\setminus\lambda_m\mid=\mid\lambda\setminus\lambda_m\mid-1$ and  we can apply inductive assumption. After applying  $\mathcal F_x$ to $Q_{\psi(\alpha)}$  and using lemma \ref{cases} we get

$$
\mathcal F_x(Q_{\psi(\alpha)})=\sum_{\beta\in E,\,\, \beta\subset \alpha}\tilde a_{\beta\alpha}(p_0)P_{\beta}
$$
with some coefficients $a_{\beta\alpha}(p_0)$ which  are rational functions in $p_0$. By 
proposition \ref{nonsing} $\mathcal F_x(Q_{\psi(\alpha)})$ is non-singular at $p_0=n+k^{-1}m$  and $d(x,p_0,\alpha)$ is also non-singular  and non-vanishing by lemma \ref{cases} in all three cases. Define 
$$
Q_{\alpha}=\frac{1}{d(x,p_0,\alpha)}\mathcal F_x(Q_{\psi(\alpha)})=\sum_{\beta\in E,\,\, \beta\subset \alpha}a_{\beta\alpha}(p_0)P_{\beta}
$$
with $a_{\beta\alpha}(p_0)=\tilde a_{\beta\alpha}(p_0)/d(x,p_0,\alpha).$

Now let us prove that the coefficients $a_{\beta\alpha}(p_0)$ have the analytic properties stated in the theorem. We have in all cases 
$$
a_{\alpha\alpha}=\frac{\tilde a_{\alpha\alpha}}{d(x,p_0,\alpha)}=\frac{d(x,p_0,\alpha)}{d(x,p_0,\alpha)}=1.
$$
Let $\beta\ne\alpha$. Then again in all three cases from lemma \ref{cases} one can see that
$$
a_{\beta\alpha}=\frac{d(x,p_0,\beta)}{d(x,p_0,\alpha)}a_{\psi(\beta)\psi(\alpha)}.
$$

 Now consider three different cases separately. 

$1)$ If $\nu\setminus x$ is non empty and connected  then by the first statement of lemma \ref{cases} $d(x,p_0,\beta)$ is regular at $p_0=n+k^{-1}m$ and the number of connected components $\alpha\setminus\beta$ is the same as the number of connected components of $\psi(\alpha)\setminus\psi(\beta)$. This implies  the theorem in this case.

$2)$ Let $\nu\setminus x=\nu_1\cup\nu_2$ be a disjoint  union of two non empty components. Consider two cases: $\beta\supset\rho$ and $\beta\cap\rho=\emptyset$, where $\rho=(\nu, \theta(\nu))$. In the first case
the number of connected components $\alpha\setminus\beta$ is the same as the number of connected components of $\psi(\alpha)\setminus\psi(\beta)$ , $d(x,p_0,\beta)$ is regular and theorem follows. In the second case the number of connected components $\alpha\setminus\beta$ is less by $1$  than the number of connected components of $\psi(\alpha)\setminus\psi(\beta)$ , $d(x,p_0,\beta)$ has zero of the first order and the theorem again  follows.

$3)$  Let $\nu= x$  and $(x, \theta(x))\in\beta$ then
  the number of connected components $\alpha\setminus\beta$ is the same as the number of connected components of $\psi(\alpha)\setminus\psi(\beta)$ , $d(x,p_0,\beta)$ is regular and the theorem follows. 
 
 If $(x, \theta(x))\notin\beta$ then  the number of connected components $\alpha\setminus\beta$ is greater by $1$  than the number of connected components of $\psi(\alpha)\setminus\psi(\beta)$, $d(x,p_0,\beta)$ has a pole of the first order and theorem again  follows. This completes the proof.
\end{proof}

\begin{corollary}
\label{corol}
The Jack--Laurent symmetric function $P^{k, p_0}_{\alpha}$ as a function of $p_0$ has a pole at $p_0=n+k^{-1}m$ of order $l$, where $l$ is defined by (\ref{eqx}) and Corollary \ref{cor1}.
\end{corollary}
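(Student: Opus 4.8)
The plan is to deduce the statement from Theorem \ref{regul} by inverting the triangular change of basis it provides. First I would reduce to the case $\alpha\in\mathcal P_{n,m}$. By Corollary \ref{cor1} the equivalence class $E(\alpha)$ is obtained from $E(\alpha_\pi)$ by the common shift $\alpha\setminus\pi$, the connected components of $\alpha\setminus\alpha_m$ all lie inside $\pi(n,m)$, and hence the integer $l$ attached to $\alpha$ coincides with the one attached to $\alpha_\pi$; the pole structure of $P_\alpha$ is therefore governed by that of $P_{\alpha_\pi}$, so I may assume $\alpha\in\mathcal P_{n,m}$ from now on.

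Next, fix $E=E(\alpha)$ and read Theorem \ref{regul} as an identity of matrices indexed by the elements of $E$ lying below $\alpha$. By part $4)$ of Theorem \ref{weight} the interval $\{\beta\in E:\alpha_m\subseteq\beta\subseteq\alpha\}$ is a Boolean lattice $2^{\{1,\dots,l\}}$, where $\beta$ corresponds to the subset of connected components of $\alpha\setminus\alpha_m$ that it contains, and the number of connected components of $\alpha\setminus\beta$ equals the cardinality of the complementary subset. The matrix $A=(a_{\beta\alpha})$ from Theorem \ref{regul} is unitriangular for the order $\subseteq$, with $a_{\beta\alpha}$ having a pole whose order is the number of components of $\alpha\setminus\beta$. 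Inverting, I would write
$$
P_\alpha=\sum_{\beta\in E,\ \beta\subseteq\alpha}b_{\beta\alpha}\,Q_\beta .
$$
Expanding $A^{-1}$ as a Neumann series, each $b_{\beta\alpha}$ is a signed sum over chains $\beta=\gamma_0\subsetneq\cdots\subsetneq\gamma_k=\alpha$ of products $\prod_i a_{\gamma_{i-1}\gamma_i}$; since the number of connected components is additive along such a chain, every summand has a pole of order exactly equal to the number of components of $\alpha\setminus\beta$, so $b_{\beta\alpha}$ has a pole of order at most that number. In particular the only coefficient that can reach order $l$ is $b_{\alpha_m\alpha}$ (attained for $\beta=\alpha_m$), and since every $Q_\beta$ is regular at $p_0=n+k^{-1}m$ by Theorem \ref{regul}, this already gives that $P_\alpha$ has a pole of order at most $l$.

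It remains to prove the matching lower bound. Because $Q_{\alpha_m}=P_{\alpha_m}$ is regular with $P_{\alpha_m}\vert_{p_0=n+k^{-1}m}\ne0$ (its leading monomial coefficient is $1$), and because every other term $b_{\beta\alpha}Q_\beta$ with $\beta\ne\alpha_m$ has a pole of order at most $l-1$, the order of the pole of $P_\alpha$ equals the order of the pole of the scalar $b_{\alpha_m\alpha}$. Thus everything reduces to showing that $b_{\alpha_m\alpha}$ has a pole of order exactly $l$, i.e. that the leading (order $l$) coefficient of the alternating chain sum does not vanish. This is the main obstacle. I would resolve it by showing that the top-order parts of the $a_{\beta\alpha}$ factorise over the connected components $\nu_1,\dots,\nu_l$ of $\alpha\setminus\alpha_m$: the recursion in the proof of Theorem \ref{regul} peels off one component at a time through the maps $\mathcal F_x$, and by Lemma \ref{cases} together with Lemma \ref{main} the resulting factor $d(x,p_0,\cdot)$ has a simple pole whose nonzero leading behaviour depends only on local data near $\nu_i$ (the conditions in Lemma \ref{main} involve only the rows adjacent to the box being moved). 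Since distinct components occupy disjoint regions of $\pi(n,m)$, the leading coefficient of $a_{\beta\alpha}$ depends only on the set of components in $\alpha\setminus\beta$ and is multiplicative over them; hence on the interval $[\alpha_m,\alpha]$ the leading part of $A$ is a tensor product over the $l$ components of $2\times 2$ unipotent blocks, its inverse is the tensor product of the inverse blocks, and the corner entry $b_{\alpha_m\alpha}$ has leading coefficient equal to the product of the $l$ nonzero single-component residues, hence nonzero.

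Combining the two bounds, $P_\alpha=b_{\alpha_m\alpha}P_{\alpha_m}+(\text{pole of order}\le l-1)$ with $b_{\alpha_m\alpha}$ of pole order exactly $l$ and $P_{\alpha_m}$ regular and nonvanishing at the point, so no cancellation can occur at order $l$, and $P_\alpha$ has a pole of order exactly $l$. The only step that genuinely requires care is the factorisation in the previous paragraph: one must verify that the leading coefficients of the $a_{\beta\alpha}$ are multiplicative over the connected components, which rules out cancellation in the alternating chain sum. The local nature of all the relevant quantities is what makes this tensor-product structure both plausible and, with the recursion of Theorem \ref{regul} in hand, provable.
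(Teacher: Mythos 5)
Your reduction to $\alpha\in\mathcal P_{n,m}$, your inversion of the unitriangular system of Theorem \ref{regul}, and your upper bound (pole order at most $l$) are all correct, and this is indeed the natural way to read the paper, which states the corollary without any explicit proof. You also correctly isolate where the real content lies: everything reduces to showing that the coefficient $b_{\alpha_m\alpha}$ of $Q_{\alpha_m}=P_{\alpha_m}$ in $P_\alpha=\sum_\beta b_{\beta\alpha}Q_\beta$ has a pole of order \emph{exactly} $l$. But this is precisely where your argument has a genuine gap, because this fact cannot be extracted from the statement of Theorem \ref{regul} alone. Writing $h$ for the local parameter at $p_0=n+k^{-1}m$, already for $l=2$ one has
$$
b_{\alpha_m\alpha}=a_{\alpha_m,\alpha_m\cup\rho_1}a_{\alpha_m\cup\rho_1,\alpha}+a_{\alpha_m,\alpha_m\cup\rho_2}a_{\alpha_m\cup\rho_2,\alpha}-a_{\alpha_m\alpha},
$$
and if $a_{\alpha_m,\alpha_m\cup\rho_i}=c_ih^{-1}+\dots$, $a_{\alpha_m\cup\rho_1,\alpha}=d_2h^{-1}+\dots$, $a_{\alpha_m\cup\rho_2,\alpha}=d_1h^{-1}+\dots$, $a_{\alpha_m\alpha}=eh^{-2}+\dots$, then the top coefficient is $c_1d_2+c_2d_1-e$. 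Theorem \ref{regul} guarantees only that $c_1,c_2,d_1,d_2,e$ are all nonzero; it imposes no relation among them, so the possible vanishing of $c_1d_2+c_2d_1-e$ (equivalently, a drop in the pole order of $P_\alpha$) is exactly what has to be excluded, and it is not excluded by anything you can cite.

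You do see this, and you propose to close it by strict multiplicativity of the leading coefficients of $a_{\beta\alpha}$ over the components of $\alpha\setminus\beta$ (the tensor product of $2\times 2$ unipotent blocks). That claim, however, is asserted rather than proved, and the heuristic offered for it --- that the relevant residues depend ``only on local data near $\nu_i$'' --- is not sound: by (\ref{U})--(\ref{Uz}) the Pieri coefficients $U$, $V$, and hence the factors $d(x,p_0,\beta)$, involve products over \emph{all} rows of $\lambda$ and depend on $l(\lambda)$, $l(\mu)$, $\mu'_j$, so they are global functions of $\beta$. Moreover, the recursion behind Theorem \ref{regul} gives $a_{\beta\alpha}=\frac{d(x,p_0,\beta)}{d(x,p_0,\alpha)}\,a_{\psi(\beta)\psi(\alpha)}$, whose prefactor depends on both $\beta$ and $\alpha$; what such a recursion can plausibly yield is multiplicativity only up to a gauge factor of the form $g(\beta)/g(\alpha)$ (which would in fact still suffice to make the corner entry of $A^{-1}$ nonzero), but neither the strict nor the gauged statement is verified in your proposal. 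Note also that one cannot instead borrow the operators $\varepsilon_i$ of Section 4 to settle the non-vanishing, since that section assumes $k$ non-algebraic, whereas the corollary is asserted for all non-rational $k$. So the proof is incomplete at its decisive step: the non-cancellation of the leading residues, which is the entire content of the exactness claim, remains unestablished.
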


For bipartitions $\alpha \in \mathcal P_{n,m}$ the order $l$ of the pole at $p_0=n+k^{-1}m$ can be described geometrically as the number of connected components in the intersection $\lambda$ and $\theta(\mu)$ (which are shaded parts in Fig. 1).

From Corollary \ref{cor1} using the same technique one can show that the assumption $\alpha\in \mathcal P_{n,m}$ in the theorem can be omitted.

\begin{proposition} Theorem \ref{regul} is true without assumption $\alpha\in \mathcal P_{n,m}$.
\end{proposition}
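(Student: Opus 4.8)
The plan is to argue by induction on the total size $|\alpha\setminus\pi|=|\lambda\setminus\pi(n,m)|+|\mu\setminus\pi(n,m)|$ of the part of $\alpha$ lying outside the rectangle $\pi(n,m)$, the base case $|\alpha\setminus\pi|=0$ being exactly Theorem \ref{regul} (since then $\alpha=\alpha_\pi\in\mathcal P_{n,m}$). The structural input is Corollary \ref{cor1}: every $\beta\in E(\alpha)$ has the \emph{same} exterior part $\beta\setminus\pi=\alpha\setminus\pi$, while its interior part $\beta\cap\pi$ ranges over the class $E(\alpha_\pi)\subset\mathcal P_{n,m}$; in particular the exterior part is rigid along the equivalence class. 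I would peel off the exterior boxes one at a time, using multiplication by $p_1$ to restore a box of $\lambda\setminus\pi$ and, dually, by $p_{-1}$ for a box of $\mu\setminus\pi$.

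Suppose first $\lambda\not\subset\pi(n,m)$. A removable corner $x=(ij)$ of $\lambda$ with $x\notin\pi(n,m)$ always exists (take a removable corner in a column $>m$ when $\lambda_1>m$, or the last box of $\lambda$ when $l(\lambda)>n$). Put $\alpha'=(\lambda\setminus x,\mu)$; then $\alpha'_\pi=\alpha_\pi$, the exterior of $\alpha'$ is that of $\alpha$ with the single box $x$ deleted, and $|\alpha'\setminus\pi|=|\alpha\setminus\pi|-1$, so by induction $Q_{\alpha'}=\sum_{\beta'\in E(\alpha'),\,\beta'\subset\alpha'}a_{\beta'\alpha'}P_{\beta'}$ is regular with the asserted pole orders. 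I then apply the translation functor $\mathcal F_{E(\alpha'),E(\alpha)}(f)=Pr_{E(\alpha)}(p_1 f)$ of (\ref{tf}). The key point is that for \emph{every} $\beta'\in E(\alpha')$ the only term of the Pieri expansion of $p_1P_{\beta'}$ landing in $E(\alpha)$ is the one adjoining $x$ to the first diagram: by Corollary \ref{cor1}, membership in $E(\alpha)$ forces the $\lambda$-exterior to be exactly $\lambda\setminus\pi$, and since $\beta'$ shares the exterior of $\alpha'$ (which differs from that of $\alpha$ precisely by the missing box $x\notin\pi$), the unique Pieri move reaching this exterior is the adjunction of $x$ (removing a box from $\mu$ or adjoining a box inside $\pi$ cannot restore $x$). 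By Proposition \ref{FX} this term equals $V(x,\beta'_\lambda,\beta'_\mu)P_{\beta'\cup x}$ whenever $\beta'_\lambda\cup x$ is a diagram, and the coefficient $V$ of (\ref{Vx}) is \emph{independent of $p_0$}; otherwise the contribution is $0$.

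Hence I set $Q_\alpha=V(x,\lambda\setminus x,\mu)^{-1}\mathcal F_{E(\alpha'),E(\alpha)}(Q_{\alpha'})$, which is regular at $p_0=n+k^{-1}m$ by Proposition \ref{nonsing} together with the fact that $V$ is a nonzero constant in $p_0$. The map $\beta'\mapsto\beta=\beta'\cup x$ is a bijection between the contributing $\beta'\subset\alpha'$ in $E(\alpha')$ and all $\beta\subset\alpha$ in $E(\alpha)$ (its inverse deletes the removable corner $x$, which lies in every such $\beta$ by rigidity of the exterior), it gives $a_{\alpha\alpha}=1$, and it satisfies $\alpha\setminus\beta=\alpha'\setminus\beta'$ as skew bipartitions. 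Since all factors $V$ are $p_0$-independent, $a_{\beta\alpha}=V(x,\beta_\lambda\setminus x,\beta_\mu)\,V(x,\lambda\setminus x,\mu)^{-1}\,a_{(\beta\setminus x)\alpha'}$ has exactly the same pole order as $a_{(\beta\setminus x)\alpha'}$, which by induction equals the number of connected components of $\alpha'\setminus\beta'=\alpha\setminus\beta$, as required.

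Finally, when $\lambda\subset\pi(n,m)$ but $\mu\not\subset\pi(n,m)$ I would repeat the argument with $p_1$ replaced by $p_{-1}$: the proof of Proposition \ref{nonsing} applies verbatim to the functor $f\mapsto Pr_{E(\alpha)}(p_{-1}f)$, since it uses only the inclusion $p_{-1}V_{E}\subset\bigoplus V_{F}$ and the Cayley--Hamilton construction from the integrals $\mathcal B^{(r)}$, both insensitive to the sign; and the dual Pieri formula of \cite{SV6} adjoins an exterior box to $\mu$ with a coefficient again independent of $p_0$, so the identical bookkeeping yields the stated pole orders. The main obstacle is the single-term isolation used above: one must verify that no other Pieri move can reach the target class, which is precisely where the rigidity of the exterior part furnished by Corollary \ref{cor1} is essential. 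Once this is secured, the $p_0$-independence of the relevant Pieri coefficient makes the pole-order accounting automatic, and the induction closes.
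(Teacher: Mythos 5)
Your proof is correct and implements exactly the technique the paper has in mind: the paper gives no argument for this proposition beyond the remark that Corollary \ref{cor1} and ``the same technique'' suffice, and your induction on $|\alpha\setminus\pi|$ --- peeling exterior boxes with the translation functors (\ref{tf}), using the exterior rigidity from Corollary \ref{cor1} to isolate the single contributing Pieri term, Proposition \ref{nonsing} for regularity, and the $p_0$-independence and non-vanishing of the coefficient (\ref{Vx}) for the pole-order bookkeeping --- is a faithful, correctly detailed instantiation of that technique, reducing the general case to Theorem \ref{regul}. The only ingredients you invoke beyond what is literally stated in the paper are the Pieri formula of \cite{SV6} for boxes outside $\pi(n,m)$ and its $p_{-1}$ dual (via the involution $p_a\mapsto p_{-a}$ interchanging the two diagrams), both available in \cite{SV6} and both exactly what the paper's own Proposition \ref{FX} rests on.
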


\section{Algebra of integrals in generalised eigenspaces}

Assume now that $k$ is non-algebraic and that $p_0=n+k^{-1}m$ for some $n,m \in \mathbb Z_{>0}$ as before.

Let $E$ be an $\mathcal E$-equivalence class of bipartitions, consisting of $2^r$ elements 
and consider $2^r$-dimensional subspace $V_E(p_0) \subset \Lambda^{\pm}$  
defined as the linear span of Jack--Laurent symmetric functions $P^{(k,p_0)}_{\alpha}, \,\, \alpha \in E$ for non-special $p_0$, and as the linear span of 
$Q^{(k,p_0)}_{\alpha}, \,\, \alpha \in E$ for all $p_0$ in a neighbourhood of $p_0=n+k^{-1}m$.

The action of the algebra of CMS integrals $\mathcal D(k, p_0)$ is diagonalisable for non-special $p_0,$
but at $p_0=n+k^{-1}m$ it has a generalised eigenspace $V_E=V_E(n+k^{-1}m)$ spanned by $Q^{(k,n+k^{-1}m)}_{\alpha}, \,\, \alpha \in E.$
We are going to study now the action of the algebra in this invariant subspace.

Consider the natural homomorphism
 $$
\varphi :\mathcal D(k,n+k^{-1}m)\longrightarrow End \,(V_{E}).$$

 \begin{thm} If $k$ is non-algebraic then the  image of the homomorphism $\varphi$  is isomorphic to the tensor product of $r$ copies of dual numbers
 $$
 \mathfrak A_r=\Bbb C[\varepsilon_1,\varepsilon_2,\dots,\varepsilon_r]/(\varepsilon_1^2,\,\varepsilon_2^2,\dots,\varepsilon_r^2).
 $$
 $V_E$ is the regular representation of $\mathfrak A_r$ with respect to this action.
 \end{thm}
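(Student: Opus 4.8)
The plan is to realise the image $\mathcal A:=\varphi(\mathcal D(k,n+k^{-1}m))$ as the flat limit, as $p_0\to n+k^{-1}m$, of the semisimple algebras acting on $V_E(p_0)$ for generic $p_0$, and to recognise this limit as $\mathfrak A_r$. Fix the bijection between $E$ and the subsets of $\{1,\dots,r\}$ given by Theorem~\ref{weight}: write $\alpha_S=\alpha_m\cup\bigcup_{i\in S}(\nu_i,\tau_i)$, so that $\{Q_{\alpha_S}\}_{S\subseteq\{1,\dots,r\}}$ is the regular basis of $V_E$ produced in Theorem~\ref{regul}. The starting observation is that, by (\ref{bal}), the eigenvalue $b_a(\alpha_S,k,p_0)$ of $\mathcal B^{(a)}$ is exactly additive over components: using $\tau_i=\theta(\nu_i)$ and the identity (\ref{ident}) in the form $c(\theta(x),1+k-kp_0)=\delta-c(x,0)$ with $\delta:=k(n+k^{-1}m-p_0)$, one gets
$$b_a(\alpha_S)=b_a(\alpha_m)+\sum_{i\in S}\Delta_i b_a,\qquad \Delta_i b_a=\sum_{x\in\nu_i}\bigl(c(x,0)^{a-1}+(-1)^a(\delta-c(x,0))^{a-1}\bigr).$$
Thus for generic $p_0$ the joint spectrum of the $\mathcal B^{(a)}$ on $V_E(p_0)$ is the set of $2^r$ vertices of a combinatorial $r$-cube with edge vectors $\vec\Delta_i=(\Delta_i b_a)_{a\ge1}$, and $\Delta_i b_a$ vanishes at $\delta=0$ with $\partial_\delta\Delta_i b_a|_{\delta=0}=(a-1)\sum_{x\in\nu_i}c(x,0)^{a-2}$. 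Hence at $p_0=n+k^{-1}m$ the cube collapses to a point, which is the source of the generalised eigenspace.

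Conceptually, the proof is the statement that the flat limit of this collapsing cube is the fat point $\mathfrak A_r$, and I would make it precise by showing that everything factorises over the $r$ components, reducing to $r$ independent copies of the rank-one degeneration. For generic $p_0$ the $P$-basis identifies $V_E(p_0)=\bigotimes_{i=1}^r W_i$ with $W_i\cong\mathbb C^2$ indexed by $\{i\notin S\}$ versus $\{i\in S\}$, and additivity shows each $\mathcal B^{(a)}$ acts as $\sum_i 1\otimes\cdots\otimes H_i^{(a)}\otimes\cdots\otimes 1$; consequently $\mathcal A(p_0)=\bigotimes_{i=1}^r\mathcal A_i(p_0)$, where $\mathcal A_i(p_0)\subset\operatorname{End}W_i$ is the two–dimensional semisimple algebra generated by the $H_i^{(a)}$. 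Since the two eigenvalues in direction $i$ differ by $\Delta_i b_a\to0$, each family $\mathcal A_i(p_0)\cong\mathbb C\times\mathbb C$ degenerates, as $\delta\to0$, to the dual numbers $\mathbb C[\varepsilon_i]/(\varepsilon_i^2)$; taking the (finite) tensor product over $i$ gives $\mathcal A\cong\mathfrak A_r$ with the regular action on $V_E$. In particular the generator $\mathcal M_i$ of the $i$-th factor satisfies $\mathcal M_i^2=0$ because $\mathcal M_i(p_0)^2=\lambda_i(p_0)\mathcal M_i(p_0)$ with the edge length $\lambda_i(p_0)\to0$.

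To turn this into a rigorous argument I would work in the regular basis and control the limit directly. Since $\mathcal B^{(a)}$ is diagonal on the $P_\beta$ and $Q_{\alpha_S}=\sum_{S'\subseteq S}a_{\alpha_{S'}\alpha_S}P_{\alpha_{S'}}$ is triangular with $a_{\alpha_S\alpha_S}=1$, the operator $\mathcal B^{(a)}$ is lower triangular in the $Q$-basis with diagonal entries $b_a(\alpha_S)$, so at the special value it equals $b_a^0\,\mathrm{Id}+N^{(a)}$ with $N^{(a)}$ nilpotent. A first-order computation in $\delta$, combining the expansion of $b_a(\alpha_S)$ above with the pole orders of the $a_{\alpha_{S'}\alpha_S}$ (the order equals $|S\setminus S'|$ by Theorem~\ref{regul}), shows that the leading, codimension–one part of $N^{(a)}$ is $\sum_i g_i^{(a)}\bar{\mathcal N}_i$, where $\bar{\mathcal N}_i$ removes the component $i$ on the associated graded of the filtration $F_l=\operatorname{span}\{Q_{\alpha_S}:|S|\le l\}$ and $g_i^{(a)}$ is a nonzero multiple of $(a-1)\sum_{x\in\nu_i}c(x,0)^{a-2}$. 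Because $k$ is irrational the boxes of the disjoint components $\nu_i$ have pairwise distinct contents, so the $r$ sequences $(g_i^{(a)})_{a\ge1}$ are linearly independent by a Vandermonde/disjoint–support argument; hence the $\bar{\mathcal N}_i$ lie in the span of the $\bar N^{(a)}$, they commute (being limits of the commuting $\mathcal B^{(a)}$) and square to zero, and their $2^r$ products are linearly independent. This gives $\operatorname{gr}\mathcal A\cong\mathfrak A_r$, so $\dim\mathcal A=2^r$, while $Q_{\alpha_M}$ is a cyclic vector (its $\mathcal A$-orbit has full associated graded), whence the map $\mathcal A\to V_E,\ T\mapsto TQ_{\alpha_M}$ is an isomorphism of $\mathcal A$-modules and $V_E$ is the regular representation of $\mathcal A$.

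The main obstacle is the last step, upgrading $\operatorname{gr}\mathcal A\cong\mathfrak A_r$ to $\mathcal A\cong\mathfrak A_r$: a commutative Gorenstein local algebra with Hilbert function $\binom{r}{\bullet}$ need not be $\mathfrak A_r$ (for instance the apolar algebra of $x_1^3+x_2^3+x_3^3$ has the same Hilbert function but generators with nonzero squares), so one must genuinely prove $\mathcal M_i^2=0$ and not merely $\mathcal M_i^2\in\mathfrak m^3$. This is exactly what the tensor factorisation of the second paragraph delivers, and I expect the technical heart to be showing that this factorisation—equivalently, the factorisation of the coefficients $a_{\alpha_{S'}\alpha_S}$ over the components together with the convergence of the per–factor generators—persists at $\delta=0$; the Pieri formula and the precise zero/pole bookkeeping of Lemma~\ref{cases} are the tools for this. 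Finally, non-algebraicity of $k$ enters precisely to guarantee non-degeneracy: the relevant leading coefficients are nonzero values of polynomials in $k$, which ensures that the edge vectors $\vec\Delta_i$ stay linearly independent (the cube does not degenerate further) and that the $2^r$ monomials $\prod_{i\in T}\mathcal M_i$ remain linearly independent in $\operatorname{End}V_E$.
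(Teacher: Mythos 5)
Your overall picture---collapsing joint spectrum, additivity of the eigenvalues $b_a$ over the components $\nu_i$, rescaled limits producing nilpotents, Vandermonde nondegeneracy, and a cyclic-vector argument for the regular representation---matches the skeleton of the paper's proof (your additivity is the paper's identity $d^{(s)}_{\gamma\gamma}-d^{(s)}_{\beta\beta}=\sum_{\nu\subset\gamma\setminus\beta}\tilde g_s(\nu)$, and your Vandermonde step is Lemma \ref{lemm}, where incidentally non-algebraicity and not mere irrationality of $k$ is what is needed, since the determinant is a \emph{sum} of Vandermonde determinants and cancellation must be excluded by viewing it as a nonzero polynomial in $k$). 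However, there is a genuine gap, and it sits exactly where you flag it: the exact relations $\varepsilon_i^2=0$. Your filtration argument only yields operators $\mathcal M_i$ in the image whose squares lie deeper in the filtration, i.e.\ $\operatorname{gr}\mathcal A\cong\mathfrak A_r$; as your own apolar-algebra example shows, this does not pin down $\mathcal A$. The proposed repair---that the tensor factorisation $\mathcal A(p_0)\cong\bigotimes_i\mathcal A_i(p_0)$ persists at $\delta=0$---is never proved and is essentially a restatement of the theorem: a flat limit of subalgebras does not respect a tensor factorisation for free. Note also that the degeneration of each factor to dual numbers is not automatic: a family of diagonal subalgebras written in a \emph{fixed} basis is a constant family, whose flat limit is again semisimple; nilpotents arise here only because the eigenbasis $P_\alpha$ itself blows up at $p_0=n+k^{-1}m$, so any honest proof must control the transition matrix $A=(a_{\beta\alpha})$ and not just the eigenvalues.

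The step your sketch is missing is the paper's explicit construction that makes the quadratic relation hold identically in $p_0$, \emph{before} any limit is taken. Let $\Pi_i$ denote the projector, diagonal in the $P$-basis, onto the span of the $P_\alpha$ with $\rho_i\subset\alpha$. The paper's matrices (\ref{varep}) are precisely $A^{-1}\Pi_iA$ minus its diagonal part; their entries are supported on pairs $(\beta,\alpha)$ with $\rho_i\subset\alpha\setminus\beta$, and therefore $\tilde\varepsilon_{\nu_i}^2=0$ exactly: a nonzero product of matrix units $E_{\beta\alpha}E_{\tilde\beta\tilde\alpha}$ would force $\alpha=\tilde\beta$, which is impossible because $\rho_i\subset\alpha$ while $\rho_i$ is disjoint from $\tilde\beta$. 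Then the identity (\ref{bs}), the regularity of the $Q$-basis (Theorem \ref{regul}) and the computation $\lim_{h\to0}\tilde g_s(\nu)/h=(s-1)g_{s-1}(\nu)$ show that $\varepsilon_i=\lim_{h\to0}h\tilde\varepsilon_{\nu_i}$ exists and squares to zero, and the two linear systems (\ref{array1}), (\ref{array2}) together with Lemma \ref{lemm} identify $\varepsilon_i$ with $\varphi(\mathcal M_i)$ for genuine integrals $\mathcal M_i\in\mathcal D(k,n+k^{-1}m)$---i.e.\ they prove that these rescaled limits really lie in the image, a point your flat-limit framing also assumes rather than proves. Your parenthetical relation $\mathcal M_i(p_0)^2=\lambda_i(p_0)\mathcal M_i(p_0)$ is indeed the germ of such a proof (it is the relation satisfied by $\lambda_i\,A^{-1}\Pi_iA$), but without the convergence and membership statements above it remains a heuristic. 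Finally, for the last claim the paper argues differently from you: instead of cyclicity of $Q_{\alpha_M}$ it uses faithfulness of the action together with the socle element $\varepsilon_1\cdots\varepsilon_r$; either route works, but only after $\mathcal A\cong\mathfrak A_r$ has been established.
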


\begin{proof} 
  Let $\nu_1, \dots, \nu_r$ be the corresponding sets from Theorem \ref{weight} and Corollary \ref{cor1}, describing the equivalence class $E$
and define $$
g_s(\nu)=\sum_{x\in \nu} c(x,0)^{s-1},
$$ 
 where $c(x,a)=(j-1)+k(i-1)+a$ as before.
 
 \begin{lemma}
 \label{lemm}
 If $k$ is non-algebraic then the determinant
 $$
\Delta=\left|\begin{array}{cccc}
  g_1(\nu_1)&g_{1}(\nu_2)&\ldots &g_1(\nu_r)\\
  g_2(\nu_1)&g_{2}(\nu_2) & \ldots&g_2(\nu_r)\\
  \vdots &\vdots &\vdots & \vdots\\
  g_r(\nu_1)&g_{r}(\nu_2)&\ldots &g_r(\nu_r)
 \end{array}\right|\
$$
is non zero.
 \end{lemma}

\begin{proof} Indeed, we can represent this determinant as a sum over all sequences of boxes $x_1\in\nu_1,\dots, x_r\in\nu_r$
 $$
 \Delta=\sum_{(x_1,\dots,x_r)}\Delta(x_1,\dots,x_r),$$ 
where 
$$
\Delta(x_1,\dots,x_r)=\left|\begin{array}{cccc}
  1&1&\ldots &1\\
  c(x_1,0)&c(x_2,0) & \ldots&c(x_r,0)\\
  \vdots &\vdots &\vdots & \vdots\\
  c(x_1,0)^{r-1}&c(x_2,0)^{r-1}&\ldots &c(x_r,0)^{r-1}.
 \end{array}\right|\ 
$$  
But $\Delta(x_1,\dots,x_r)$ is Vandermonde determinant, so
$$
\Delta(x_1,\dots,x_r)=\prod_{u<v}(c(x_u,0)-c(x_v,0))=\prod_{u<v}(j_u-j_v+k(i_u-i_v)),
$$
where the product is taken over all boxes $x_u \in \nu_u, x_v \in \nu_v.$
We can suppose that if $u<v$ then the connected component $\nu_u$ is located higher and more to the right than $\nu_v$, so
we have for $u<v$ that $j_u-j_v>0$ and $i_u-i_v<0$. Therefore if we consider $\Delta(x_1,\dots,x_r)$ as a polynomial in $k$  then its constant term is strictly negative and thus the same is true for $\Delta$. In the same way we can see that the coefficient at the highest degree of $k$ is strictly positive. Since $k$ is not algebraic number we see that $\Delta\ne0$.
\end{proof}

Let $\mathcal B^{(r)}$ be the CMS integrals (\ref{bil}) and consider the following system of linear equations
\begin{equation}
\label{array1}
\left\{ \begin{array}{r}
g_{1}(\nu_1)\mathcal M_1+g_{1}(\nu_2)\mathcal M_2+\dots+g_{1}(\nu_r)\mathcal M_r=\mathcal B^{(2)}-b_2 I\\
g_{2}(\nu_1)\mathcal M_1+g_{2}(\nu_2)\mathcal M_2+\dots+g_{2}(\nu_r)\mathcal M_r=\frac{1}{2}(\mathcal B^{(3)}-b_3 I)\\
\dots\dots\dots\dots\dots\dots\dots\dots\dots\dots\dots\dots\dots\dots\dots\ldots\\
g_{r}(\nu_1)\mathcal M_1+g_{r}(\nu_2)\mathcal M_2+\dots+g_{r}(\nu_r)\mathcal M_r=\frac{1}{r}(\mathcal B^{(r+1)}-b_{r+1} I)
\end{array}
\right.
\end{equation}
where the eigenvalue $b_s=b_s(\alpha, k, n+k^{-1}m)$ does not depend on $\alpha\in E$.
Since the determinant of this system is nonzero, the system has a unique solution $\mathcal M_1,\dots,\mathcal M_r$, which are certain CMS integrals. 
We claim that the image of $\mathcal M_i$ of under $\varphi$ give us required $\varepsilon_i.$

To show this consider the transition matrix $A=(a_{\beta\alpha}), \beta,\alpha\in E$ from the basis $P$ to $Q$ in $V_E(p_0):$
 $$
 Q_{\alpha}^{k,p_0}=\sum_{\beta\subset \alpha}a_{\beta\alpha}P_{\beta}^{k,p_0}.
 $$
 Let $A^{-1} =(\tilde a_{\beta\alpha})$ be the inverse matrix. It is easy to see that $\tilde a_{\beta\alpha}$ can be different from $0$ only  if $\beta\subset\alpha$. Now let $\nu$ be one of $\nu_1, \dots, \nu_r$ and define $2^r\times 2^r$ matrix
\begin{equation}
\label{varep}
\tilde\varepsilon_{\nu}=\sum_{\beta,\gamma,\alpha}\tilde a_{\beta\gamma} a_{\gamma\alpha}E_{\beta\alpha},
\end{equation}  
where the sum is taken over all triples $\beta\subset\gamma\subset\alpha$ from $E$ such that $\gamma\setminus\beta\supset\rho, \, \rho=(\nu, \theta(\nu))$ and $E_{\beta\alpha}, \, \alpha, \beta \in E$ are standard matrices
with only one non-zero matrix element $(\beta \alpha)$ equal to 1.

Let $D^{(s)}=D^{(s)}(p_0)$ be the matrix of the operator $\mathcal B^{(s)}$ in the basis $P_{\alpha}^{k,p_0}$, which is a diagonal matrix with the diagonal elements $d^{(s)}_{\alpha \alpha}=b_s(\alpha,k,p_0)$.
 Then the matrix of the operator $\mathcal B^{(s)}$ in the basis $Q_{\alpha}^{k,p_0}$ is $\tilde D^{(s)}=A^{-1}D^{(s)}A$.  
Consider the matrix 
 $$
 B^{(s)}=\tilde D^{(s)}-D^{(s)}=A^{-1}D^{(s)}A-D^{(s)}
  $$
with matrix elements
 $$
 b^{(s)}_{\beta\alpha}=\sum_{\beta\subset\gamma\subset\alpha}\tilde a_{\beta\gamma}d^{(s)}_{\gamma\gamma} a_{\gamma\alpha}-d^{(s)}_{\beta\alpha}=\sum_{\beta\subset\gamma\subset\alpha}\tilde a_{\beta\gamma}(d^{(s)}_{\gamma\gamma} -d^{(s)}_{\beta\beta})a_{\gamma\alpha},
 $$
where we have used that $A^{-1}A=I$. It is easy to see from the form of $d^{(s)}_{\beta \beta}=b_s(\beta,k,p_0)$ that 
 $$
 d^{(s)}_{\gamma\gamma}-d^{(s)}_{\beta\beta}=\sum_{\nu\subset\gamma\setminus\beta}\tilde g_s(\nu),
 $$
 where
 $$
 \tilde g_s(\nu)=\sum_{x\in\nu}c(x,0)^{s-1}+(-1)^s\sum_{x\in\theta(\nu)}c(x,1+k-kp_0)^{s-1}.
 $$
 Therefore the matrix $B^{(s)}$ can be represented in the form
   \begin{equation}
\label{bs}
 B^{(s)}=\sum_{\beta\subset\alpha}b^{(s)}_{\beta\alpha}E_{\beta\alpha}=\sum_{\nu}\tilde g_s(\nu)\tilde\varepsilon_{\nu}.
\end{equation}
 From this we see that the matrices $\tilde \varepsilon_{\nu}$ satisfy the following system of linear relations  
  \begin{equation}
\label{array2}
 \left\{ \begin{array}{r}
\tilde g_{2}(\nu_1)\tilde\varepsilon_1+\tilde g_{2}(\nu_2)\tilde\varepsilon_2+\dots+\tilde g_{2}(\nu_r)\tilde\varepsilon_r=\tilde D^{(2)}-D^{(2)}\\
\tilde g_{3}(\nu_1)\tilde\varepsilon_1+\tilde g_{3}(\nu_2)\tilde\varepsilon_2+\dots+\tilde g_{3}(\nu_r)\tilde\varepsilon_r=\tilde D^{(3)}-D^{(3)}\\
\dots\dots\dots\dots\dots\dots\dots\dots\dots\dots\dots\dots\dots\dots\dots\ldots\\
\tilde g_{r+1}(\nu_1)\tilde\varepsilon_1+\tilde g_{r+1}(\nu_2)\tilde\varepsilon_2+\dots+\tilde g_{r+1}(\nu_r)\tilde\varepsilon_r=\tilde D^{(r+1)}-D^{(r+1)}.
\end{array}
\right.
\end{equation}
 
 It will be convenient now to use instead of $p_0$ the local parameter
$$h=m+kn-kp_0,$$ such that $h=0$ when $p_0=n+k^{-1}m.$
 From the identity (\ref{ident}) we have 
 $$
 \tilde g_s(\nu)=\sum_{x\in\nu}c(x,0)^{s-1}-\sum_{x\in\nu}(c(x,0)-h)^{s-1},
 $$
 which implies that 
 $$
 \lim_{h\rightarrow 0}\frac{\tilde g_s}{h}=(s-1)g_{s-1}(\tau).
 $$
 From lemma \ref{lemm} the determinant
 $$
\tilde \Delta=\left|\begin{array}{cccc}
  \tilde g_2(\nu_1)/h&\tilde g_{2}(\nu_2)/h&\ldots &\tilde g_2(\nu_r)/h\\
  \tilde g_3(\nu_1)/h&\tilde g_{3}(\nu_2)/h & \ldots&\tilde g_3(\nu_r)/h\\
  \vdots &\vdots &\vdots & \vdots\\
  \tilde g_{r+1}(\nu_1)/h&\tilde g_{r+1}(\nu_2)/h&\ldots &\tilde g_{r+1}(\nu_r)/h
 \end{array}\right|\
$$
is not zero and since the right hand side is regular at $h=0$ we can define
\begin{equation}
\label{ep}
\varepsilon_{\nu}=\lim_{h\rightarrow 0}h\tilde\varepsilon_{\nu}.
\end{equation}

Taking limit $h \rightarrow 0$ in (\ref{array2}) and comparing the result with the system (\ref{array1}) we see that 
$\varepsilon_{i}=\varepsilon_{\nu_i}$ satisfy the same linear system as (and hence coincide with) $\varphi(\mathcal M_i)$ in the basis $Q_{\alpha}^{k,n+k^{-1}m}.$

Thus we have shown that $\varepsilon_{i}$ belong to the image of $\varphi.$
We claim now that $\varepsilon_i^2=0,\,i=1,\dots,r$ and that the products $\varepsilon_{i_1}\dots\varepsilon_{i_s}$ are linearly independent for all subsets $\{i_1,\dots,i_s\}\subset\{1,\dots,r\}$.
  
The relations  $\varepsilon_i^2=0$ follows from the equality $\tilde\varepsilon_i^2=0$, which is a simple consequence of the formula (\ref{varep}). Indeed, it is easy to see that for any two terms $E_{\beta\alpha}$ and $E_{\tilde\beta\tilde\alpha}$ entering (\ref{varep}) we have $\alpha\neq\tilde\beta$ since $\nu$ is a subset of $\alpha$, but not of $\tilde\beta.$

Define 
$
c^{i}_{\beta \alpha}=\sum \tilde a_{\beta\gamma} a_{\gamma\alpha},
$ 
where the sum is taken over all $\gamma \in E$ such that $\beta\subset\gamma\subset\alpha$ and $\gamma\setminus\beta\supset\rho_i, \, \rho_i=(\nu_i, \theta(\nu_i)).$ 
Then $$\tilde\varepsilon_{i}=\sum c^{i}_{\beta \alpha} E_{\beta \alpha},$$ where the sum is taken over $\alpha, \beta \in E$ such that $\beta \subset \alpha$ and $\alpha \setminus \beta$ contains $\rho_i.$

We have
 $$
\tilde\varepsilon_{i_1}\dots\tilde\varepsilon_{i_s}=\sum_{\beta\subset \alpha}c^{i_1,\dots,i_s}_{\beta,\alpha}E_{\beta,\alpha},
$$
where
$$
c^{i_1,\dots,i_r}_{\beta,\alpha}=\sum_{\beta\subset \gamma_1\subset\dots\subset \gamma_{s-1}\subset \alpha}c^{i_{1}}_{\beta \gamma_1}\dots c^{i_s}_{\gamma_{s-1}\alpha}
$$
and sum is taken for all possible chains such that $\rho_{i_1}\subset \gamma_1\setminus \beta$, $ \dots$, $\rho_{i_s}\subset \alpha\setminus \gamma_{s-1}$. 

If $\beta=\alpha_m$ is minimal in the sense of Theorem \ref{weight} and Corollary \ref{cor1} and $\alpha=\beta \cup \rho_{i_1} \cup \dots \cup  \rho_{i_s}$ then
 there is the only chain
$$
\beta\subset \beta \cup \rho_{i_1}\subset \beta \cup \rho_{i_1} \cup \rho_{i_2} \subset \dots\subset \alpha
$$
and 
$
c^{i_1,\dots,i_s}_{\beta,\alpha}=c^{i_{1}}_{\beta \gamma_1}\dots c^{i_s}_{\gamma_{s-1}\alpha}.
$
Now look at the coefficient $c^{i}_{\beta \alpha},$  where $\alpha = \beta \cup \rho_i.$ In that case
$
c^{i}_{\beta\alpha}=\tilde a_{\beta\alpha} a_{\alpha \alpha}= \tilde a_{\beta\alpha} = -a_{\beta\alpha}.$
From theorem \ref{regul} this coefficient has a pole of order 1, so the limit $hc^{i}_{\beta\alpha}$ when $h \rightarrow 0$ is non-zero.
Hence the product $\varepsilon_{i_1}\dots\varepsilon_{i_s}$ has a nonzero coefficient at $E_{\alpha_m \alpha_m \cup \rho_{i_1} \cup \dots \cup  \rho_{i_s}}.$ One can check that $E_{\alpha_m \alpha_m \cup \rho_{i_1} \cup \dots \cup  \rho_{i_s}}$ does not enter in any other product of $\varepsilon_i.$ This proves linear independence of $\varepsilon_{i_1}\dots\varepsilon_{i_s}.$ 

The fact that $\varepsilon_i, \, i=1,\dots, r$ generate the whole image of $\varphi$ follows from the formula (\ref{bs}) and from the fact that the operators $\mathcal B^{(l)}$ generate the algebra of CMS integrals.

Note that the commutativity of $\varepsilon_i$ (which follows from the commutativity of CMS integrals) imply some relations for the coefficients $a_{\beta\alpha}.$

To prove that the corresponding action of $\mathfrak A_r$ in $V_E$ is the regular representation consider the socle \footnote{We are grateful to Pavel Etingof for this idea.} of $\mathfrak A_r$ generated by the product $S=\varepsilon_1\dots \varepsilon_r \in \mathfrak A_r.$ The action of $\mathfrak A_r$ in $V_E$ is faithful, so there is a vector $v \in V_E$ such that $Sv\neq 0.$ Since $S$ belongs to all non-zero ideals of $\mathfrak A_r$ the subspace $\mathfrak A_r v \subseteq V_E$ is the regular representation of $\mathfrak A_r.$ Now the claim follows since  $\mathfrak A_r v$ and $V_E$ have the same dimension $2^r$ and thus must coincide.
\end{proof}

\section{Concluding remarks}

The behaviour of Jack symmetric functions for special (namely, positive rational) values of parameter $k$  are known to be quite tricky and is still to be properly understood.  As it was shown by B. Feigin et al \cite{FJMM}  this question turned out to be closely related with the classical  coincident root loci problem going back to Sylvester and Cayley (see \cite{KMSV}).

We have shown that the Jack--Laurent case turns out to be much simpler in this respect
and the analytic properties of the coefficients can be described in a satisfactory manner (see section 3 above). 
The reason is that in this case we have two parameters $k$ and $p_0,$ 
and we can fix $k$ to be generic and consider the analytic properties in $p_0$ instead.

Our main motivation to study Jack-Laurent symmetric functions came from the representation theory of Lie superalgebras, where the case of special parameters is particularly important. We will discuss this in a separate publication.
  
\section{Acknowledgements}

We are grateful to P. Etingof for very helpful remarks and to B. Feigin and J. Shiraishi for stimulating discussions.

This work was partly supported by the EPSRC (grant EP/J00488X/1). ANS is grateful to Loughborough University for the hospitality during the autumn semesters 2012-14.

\end{document}